\theoremstyle{remark}
\newtheorem{proposition}{Proposition}
\newcommand{\argmax}[1]{\underset{#1}
{\operatorname{arg}\,\operatorname{max}}\;}
\def\mcal{\mathcal}
\def\mbb{\mathbb}
\title{\LARGE \bf
Optimal control policies for evolutionary dynamics with environmental feedback
}
\author{Keith Paarporn$^{1}$, Ceyhun Eksin$^{2}$, Joshua S. Weitz$^{3,4,1}$, Yorai Wardi$^{1}$ 
\thanks{$^{1}$School of Electrical and Computer Engineering, Georgia Institute of Technology, Atlanta, GA 30332
        {\tt\small kpaarporn@gatech.edu, ywardi@ece.gatech.edu}}%
\thanks{$^{2}$Industrial \& Systems Engineering Department, Texas A\&M University, College Station, TX 77843 {\tt\small eksinc@exchange.tamu.edu}}
\thanks{$^{3}$School of Biological Sciences, Georgia Institute of Technology, Atlanta, GA}
\thanks{$^{4}$School of Physics, Georgia Institute of Technology, Atlanta, GA  {\tt\small jsweitz@gatech.edu}}%
}
\begin{document}

\maketitle
\thispagestyle{empty}
\pagestyle{empty}

\begin{abstract}
We study a dynamical model of a population of cooperators and defectors whose actions have long-term consequences on environmental ``commons'' - what we term the ``resource''. Cooperators contribute to restoring the resource whereas defectors degrade it. The population dynamics evolve according to a replicator equation coupled with an environmental state. Our goal is to identify methods of influencing the population with the objective to maximize accumulation of the resource. In particular, we consider strategies that modify individual-level incentives. We then extend the model to incorporate a public opinion state that imperfectly tracks the true environmental state, and study strategies that influence opinion. We formulate optimal control problems and solve them using numerical techniques to characterize locally optimal control policies for three problem formulations: 1) control of incentives, and control of opinions through 2) propaganda-like strategies and 3) awareness campaigns. We show numerically that the resulting controllers in all formulations achieve the objective, albeit with an unintended consequence. The resulting dynamics include cycles between low and high resource states - a dynamical regime termed an ``oscillating tragedy of the commons''. This outcome may have desirable average properties, but includes risks to resource depletion. Our findings suggest the need for new approaches to controlling coupled population-environment dynamics.

\end{abstract}
\section{Introduction}\label{sec:intro}

A tragedy of the commons occurs when individuals in a population are driven by their own selfish interests, resulting in the depletion of a common resource on which they all depend. The interactions that drive such tragedies are modeled in classical game theory as a prisoner's dilemma \cite{Dawes_1975,Ostrom_1990,Weitz_2016}. The rational choice for an individual is to defect, regardless of what others are doing. However, classical models do not account for the consequences of action - individual actions affect the environment. Consequently, the state of the environment may shape individual incentives for future action. Dynamical models of these coevolutionary features have been developed  to understand general conditions under which tragedies will occur or be averted \cite{Weitz_2016,Brown_2007,Roopnarine_2013}.  Similarly, the study of common-pool resource games suggest that rational play among larger populations leads to resource collapse with higher probability \cite{Hota_2016,Rapoport_1992}.

In his landmark paper \cite{Hardin_1968}, Hardin argues that such tragedies are inevitable given a growing human population, unless preventative measures are taken. To address the problem of preventing tragedies, there has been speculation about what intervention strategies will be effective. Interventions from centralized government entities are called for, through implementing and enforcing new policies restricting overconsumption \cite{Ostrom_1990,Aronson_2007}. For example, imposing taxes on resource usage may provide a financial deterrent to overuse \cite{Hota_2016_CDC}. Passing regulatory laws on fishers gives fish populations a chance to recover \cite{Hutchings_2004}. Hence, such direct intervention policies provide the incentives necessary to instigate conservation behaviors \cite{Penn_2003}.

Information also plays an important role. Individuals may not take pro-environmental actions if they are not informed about why such actions are necessary \cite{VanVugt_2009}.   Environmental awareness and education can lead to behavior changes when individuals realize that environmental degradation has adverse effects on their own community or household. For example, information from household metering about the severity of water scarcity drove efforts to conserve water \cite{VanVugt_1999}. However, statistics and facts may be ineffective to instigate behavior changes if such issues are politicized \cite{Lupia_2013}. In these situations, public opinion is susceptible to propaganda from news outlets and social media.  Environmental information is necessary to affect behavior change, but may not be sufficient \cite{Stern_2000,Penn_2003,VanVugt_2009}. The efficacy of these proposed solutions are rarely tested using dynamical models that couple actions and environmental changes \cite{Ostrom_1990}. 

A taxation mechanism on resource investment was studied in the setting of a common-pool resource game where under certain conditions, higher tax rates can lead to lower probability of resource collapse \cite{Hota_2016_CDC}. However, asymptotic outcomes are not considered in this static one-shot game. In a recent work \cite{Manzoor_2014}, an infinite horizon optimal control framework was applied to a dynamical model\footnote{Those dynamics can be reduced to a linear system by an appropriate transformation.  This differs fundamentally from the dynamical system considered in this paper, which is highly nonlinear and cannot be transformed into a linear system.} to identify conditions under which an optimal prescribed consumption rate ensures resource sustainability. However, the consumption rate is not directly manipulated by taxing, pricing, or other social control policies.

In contrast, we consider in this paper such direct control policies. We formulate optimal control problems that study the role of incentive and information-based intervention policies with the objective of maximally conserving the environmental state over a finite time horizon. We apply these control formulations to the model of ref. \cite{Weitz_2016},  due to its general framework. It models a population of myopic individuals whose actions affect and are affected by the environment. This framework differs from that of differential games \cite{Basar_1999}, where individuals select strategies to maximize long-term payoffs given action-dependent dynamic environments.  We formulate an incentive control problem by allowing an external entity to influence the population's incentive to cooperate together. To implement information-based control policies, we introduce a dynamic public opinion that imperfectly tracks the true environmental state. We present two formulations in which the control directly affects public opinion: propaganda strategies that perturb public opinion, and awareness-raising strategies where learning of the true environmental state is encouraged. In all three formulations, we compute optimal controls by numerical means (by ``optimal" in this paper, we mean locally optimal since the problems we formulate are nonconvex).

The main contributions and findings of this paper are 1) the formulation of optimal control problems to address the tragedy of the commons through direct policy interventions and 2) the solutions of these problems, obtained by numerical techniques, result in highly oscillatory behavior. In particular, we show through simulations that the objectives of the formulated problems are achieved, at the expense of inducing highly variant dynamics characterized by oscillatory cycles between low and high resource states.

The paper is organized as follows.  Section \ref{sec:model} presents the feedback-evolving game model of \cite{Weitz_2016}. Section \ref{sec:control_incentive} formulates the incentive optimal control problem, and presents numerical results from applying a suitable control algorithm \cite{Hale_2016}. We prove in this formulation that an optimal controller is necessarily bang-bang. In Section \ref{sec:control_information}, we introduce the public opinion dynamics, formulate the propaganda and awareness-raising control problems, and present numerical results. Concluding remarks and discussion points are given in Section \ref{sec:conclusion}.

\section{Model}\label{sec:model}

\subsection{Feedback-evolving games}

Here, we review the model of \cite{Weitz_2016}, which incorporates environmental feedback into replicator dynamics of a $2\times 2$ game, where the strategies are cooperate ($\mcal{C}$) and defect ($\mcal{D}$). This model is intended to provide a general framework in which to portray the dynamics of tragedy of the commons scenarios.  It incorporates an environment state $n(t)\in[0,1]$ where $n=0$ ($n=1$) means the environment is completely depleted (replenished). We will use the terms environment and common resources interchangeably to refer to $n$. The game payoffs are determined by $n$ as follows. 
\begin{equation}\label{eq:An}
	A_n =  \left[\begin{array}{cc} R_n  & S_n \\ T_n & P_n \end{array}\right]  \equiv n\left[\begin{array}{cc} R_1 & S_1 \\ T_1 & P_1 \end{array}\right] + (1-n)\left[\begin{array}{cc} R_0  & S_0 \\ T_0 & P_0 \end{array}\right].
\end{equation}
\begin{figure}
	\centering
	\includegraphics[scale=.55]{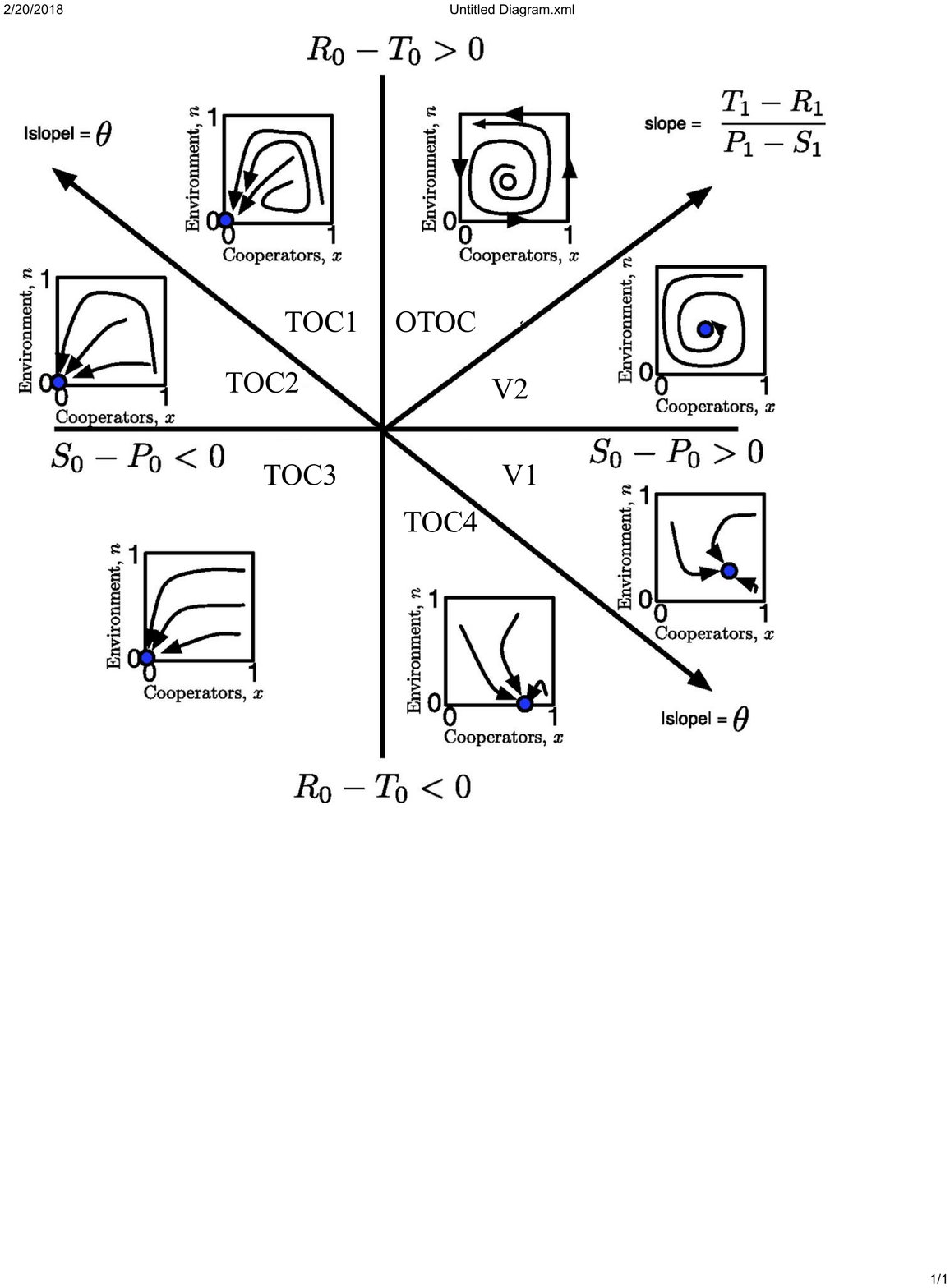}
	\caption{(Adapted from \cite{Weitz_2016}) Summary of all possible dynamical outcomes given choice of payoffs in deplete state. The regions are determined by the relative payoffs $S_0-P_0$ (x-axis) and $R_0 - T_0$ (y-axis). The phase portraits are illustrated in each region, where blue dots indicate stable fixed points of the dynamics. The seven regions include outcomes where a tragedy of the commons (TOC) occurs, and where a TOC is averted. We assign labels to each region, which includes four TOC outcomes, two averted outcomes (V1 and V2), and one oscillating TOC (OTOC). Here, the white dot indicates an unstable fixed point.}
	\label{fig:region_labels}
\end{figure}
When $n=1$, agents play a game determined by the payoff matrix $A_1$, given by the first matrix of the right-hand side above. Similarly, when $n=0$, the game is determined by the second matrix above, $A_0$. In the game with payoffs $A_1$, we impose that defection is the dominant strategy, that is, $R_1<T_1$ and $S_1 < P_1$. Thus, players will always prefer to defect when resources are abundant. The only pure Nash equilibrium in this game is mutual defection, where players obtain a payoff $P_1$. The structure of the game in the depleted state, given by the payoff matrix $A_0$, is a free parameter to allow different asymptotic outcomes of the system.  The frequency-dependent fitnesses for cooperators and defectors are therefore
\begin{equation}\label{eq:fitnesses}
	\begin{aligned}
		f_{\mcal{C}}(x,n) &= R_nx + S_n(1-x) \ \text{(cooperator fitness)}\\
		f_{\mcal{D}}(x,n) &= T_nx + P_n(1-x)  \ \text{(defector fitness)}
	\end{aligned}
\end{equation}
where $x\in[0,1]$ is the fraction (frequency) of cooperators in the population, and $1-x$ the fraction of defectors. The game-environment coupled dynamics obey the following differential equations.
\begin{equation}\label{eq:xn_ODE}
	\begin{aligned}
		&\dot{x} = x(1-x)g(x,n)  \\
		&\dot{n} = n(1-n)(\theta x - (1-x)) \\
		&x_0,n_0 \in [0,1]
	\end{aligned}
\end{equation}
where $g(x,n) \equiv f_{\mcal{C}}(x,n) - f_{\mcal{D}}(x,n)$ is the difference in fitness between cooperators and defectors. The $n(1-n)$ term indicates a logistic growth of the environmental state, and serves to constrain the dynamics to $n(t) \in [0,1]$ $\forall t \geq 0$. The growth or decline of the environment depends on the fraction $x$ of cooperators in the population, who enhance $n$ at a rate $\theta>0$ while defectors degrade $n$ at a rate $-1$.
We denote the state vector $\bm{y}(t) \equiv [x(t),n(t)]^\top$ and the system mapping of \eqref{eq:xn_ODE} as $F : [0,1]^2 \rightarrow \mbb{R}^2$.

 There are four ``corner" fixed points, (0,0), (1,0), (0,1), and (1,1). When $x = 0$, the trajectory is confined to the left edge of the state space, and converges to the equilibrium (0,0). When $x=1$, it is on the right edge and converges to (1,1). When $n=0$, the dynamics follow a replicator dynamic corresponding to the base game $A_0$, and when $n=1$, the dynamic converges to (0,1) since this corresponds to replicator dynamics of the PD game. However, we focus our attention on system dynamics in the interior of the state space $(0,1)^2$, which is forward invariant.

\subsection{Summary of dynamics in feedback-evolving games}

The behavior of the system \eqref{eq:xn_ODE} relies on the choice of the payoff parameters $R_0,S_0,T_0$, and $P_0$ of the game $A_0$. There are seven possible dynamical regimes, and they are summarized and named in Figure \ref{fig:region_labels}. The outcomes that are possible include a tragedy of the commons (TOC1 - TOC4), aversion of TOC (V1 and V2), and an ``oscillating" TOC (OTOC). In V2, trajectories asymptotically approach an interior fixed point. In OTOC, trajectories approach an asymptotically stable heteroclinic cycle, defined by the counter-clockwise orientation of the corners and the edges connecting them (see SI of \cite{Weitz_2016} for details). This dynamical outcome is termed an ``oscillating tragedy of the commons''  because it is characterized by cycles between replete and deplete environmental states.

\section{Incentive control policies}\label{sec:control_incentive}

\subsection{Optimal control formulation}

We consider here strategic policies that influence individuals' incentives to cooperate together with the goal of conserving public resources over time. The control variable $u(t)$ is applied to the payoff matrix \eqref{eq:An} as follows.
\begin{equation}\label{eq:Anu}
	A_n(u(t)) =  n\left[\begin{array}{cc} R_1 & S_1 \\ T_1 & P_1 \end{array}\right] + (1-n)\left[\begin{array}{cc} R_0+u(t)  & S_0 \\ T_0 & P_0 \end{array}\right].
\end{equation}
In this formulation, we will constrain $u(t) \in [-u_m,u_m]$ for all $t$, where $u_m > 0$ is a positive constant. We formulate the following optimal control problem in Bolza form with no terminal cost.
\begin{equation}\label{eq:control_incentive}
	\begin{aligned}
		&\max_{u} J=  \int_0^{T_f} n^{2}(t) dt \\
		&\text{subject to } \begin{cases} \dot x = x(1-x)g(x,n) + x^2(1-x)(1-n)u \\
		\dot n =n(1-n)(-1+(1+\theta)x) \\
		x_0,n_0 \in (0,1) \\
		u(t) \in [-u_m,u_m] \ \forall t \in [0,T_f] \end{cases}
	\end{aligned}
\end{equation}
Recall that $x$ and $n$ are the state variables with $u$ as the control. The term $x^2(1-x)(1-n)$ appears after re-deriving the replicator equation with the payoff \eqref{eq:Anu} in the same manner as \eqref{eq:fitnesses}  with the payoff matrix $A_n(u)$. The Hamiltonian of this formulation is
\begin{equation}\label{eq:Hamiltonian_incentive}
		\begin{aligned}
			H(\bm{y},\bm{\lambda},u) = &\lambda_x x(1-x)( g(x,n) + x(1-n)u) \\
			&+ \lambda_n n(1-n)(-1 + (1+\theta)) + n^2
		\end{aligned}
\end{equation}
The first-order optimality conditions required by Pontryagin's Maximum Principle (PMP) are given by the co-state dynamical equations
\begin{equation}\label{eq:costate_incentive}
	\begin{aligned}
		&\dot{\lambda}_x = -\frac{\partial H}{\partial x}(\bm{y},\bm{\lambda},u) \\
		&\dot{\lambda}_n = -\frac{\partial H}{\partial n}(\bm{y},\bm{\lambda},u) \\
		&\lambda_x(T_f) = \lambda_n(T_f) = 0
	\end{aligned}
\end{equation}
and the pointwise maximizer of the Hamiltonian
\begin{equation}\label{eq:ustar_incentive}
		u^*(t) = \begin{cases}
				u_m \ &\text{if } \varphi(t) > 0 \\
				? &\text{if } \varphi(t) = 0 \\
				-u_m &\text{if } \varphi(t) < 0
				\end{cases}
\end{equation}
where $\varphi(t) \equiv  x^2(1-x)(1-n) \lambda_x(t)$ is the switching function. In the case when $\varphi(t) = 0$, the Hamiltonian is independent of $u$, and hence $u^*$ can take an arbitrary value. Note that $x^2(1-x)(1-n) > 0$ for all $t$ because of invariance of the interior. As long as $\varphi(t) = 0$ does not occur on an open interval in the time horizon $[0,T_f]$, $u^*(t)$ is a bang-bang controller (no singular arcs). Hence, $u^*(t)$ will only take two values -  the minimum and maximum points in the constraint set $[-u_m,u_m]$. We prove that this is indeed true, using the Lie bracket to rule out the existence of any singular arcs (Ch. 4.4 of \cite{Liberzon}). At the isolated switching times, we may assume $u^*(t)$ takes one of the two values $\{-u_m,u_m\}$ to enforce one-sided continuity.
\begin{figure*}[t!]
	\centering
	\begin{subfigure}[t]{\columnwidth}
		\includegraphics[scale=.3]{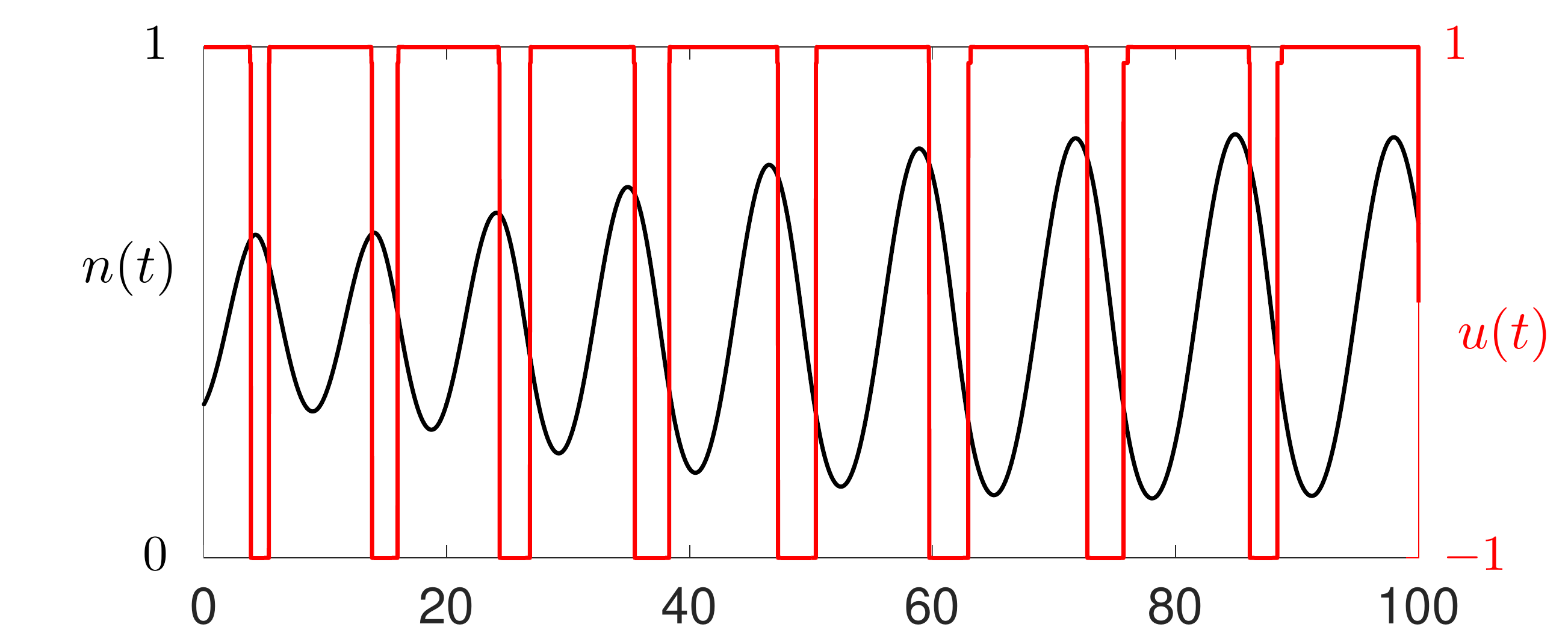}
		\includegraphics[scale=.25]{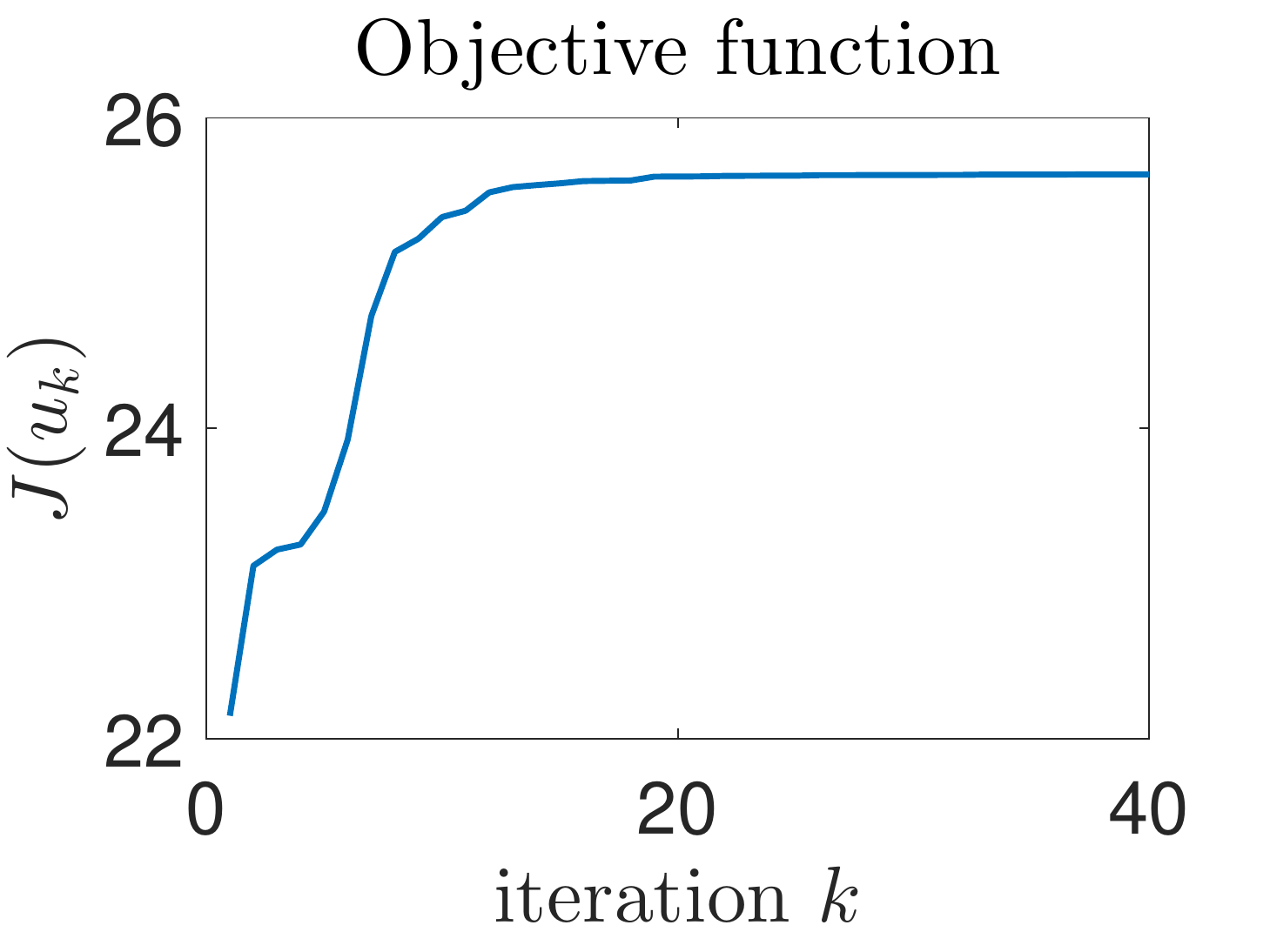}
		\includegraphics[scale=.25]{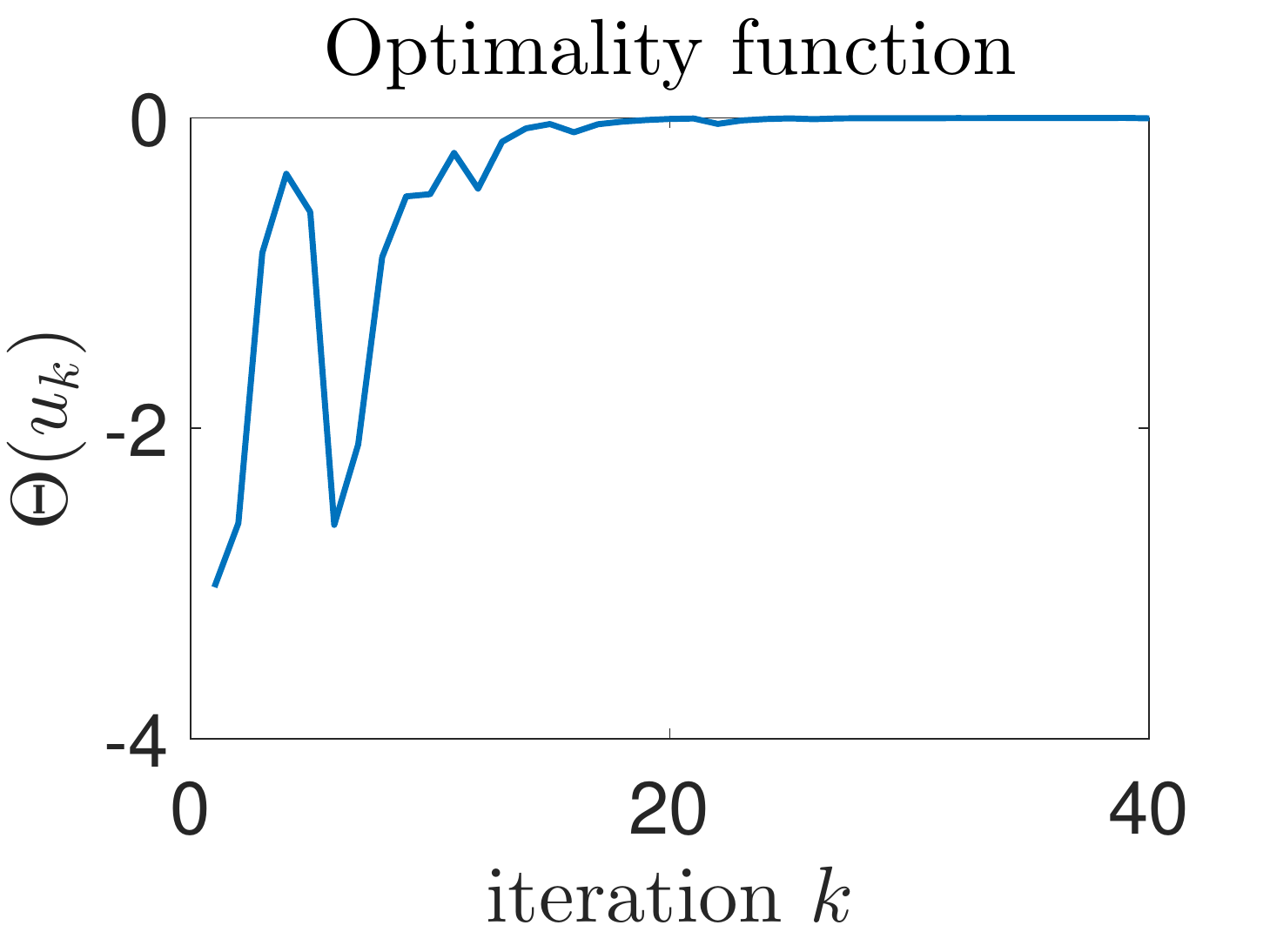}
		\centering
		\caption{Initial guess $u_0(t) = 0$.}
		\label{fig:u0_incentive}
	\end{subfigure}  
	\begin{subfigure}[t]{\columnwidth}
		\includegraphics[scale=.3]{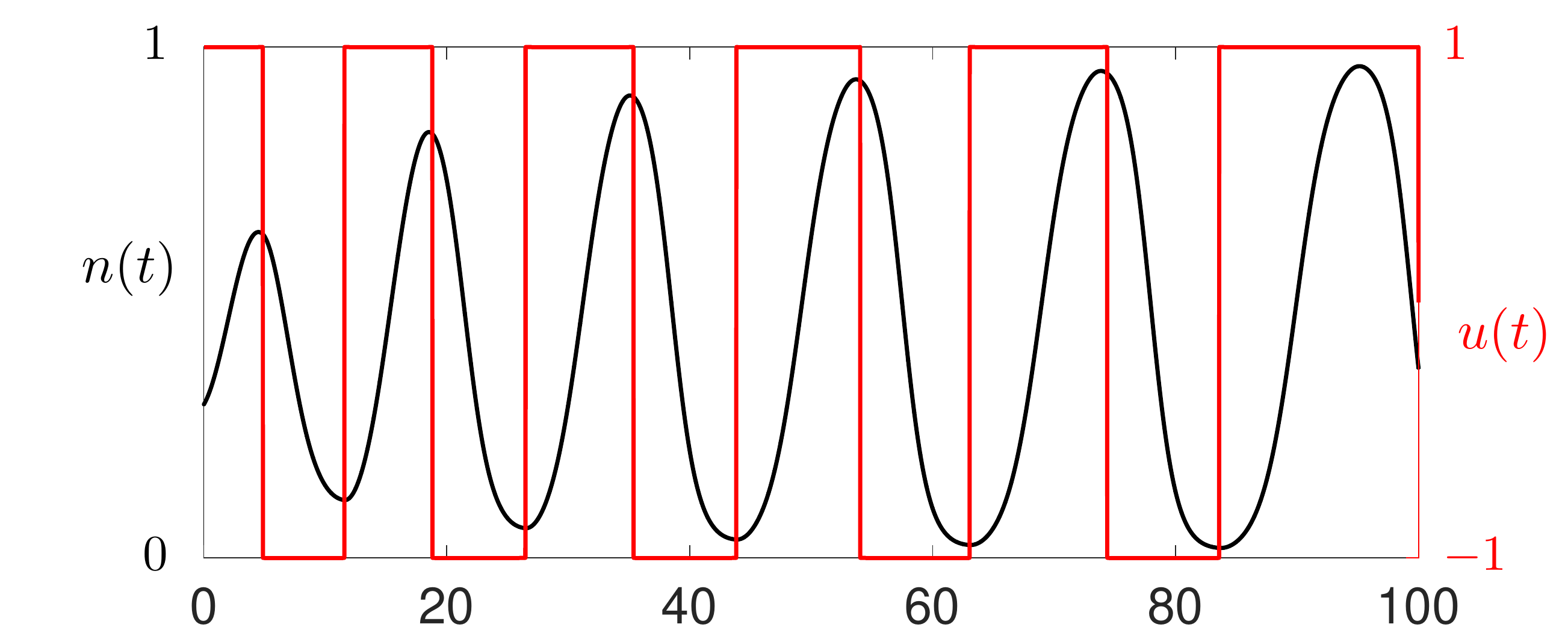}
		\includegraphics[scale=.25]{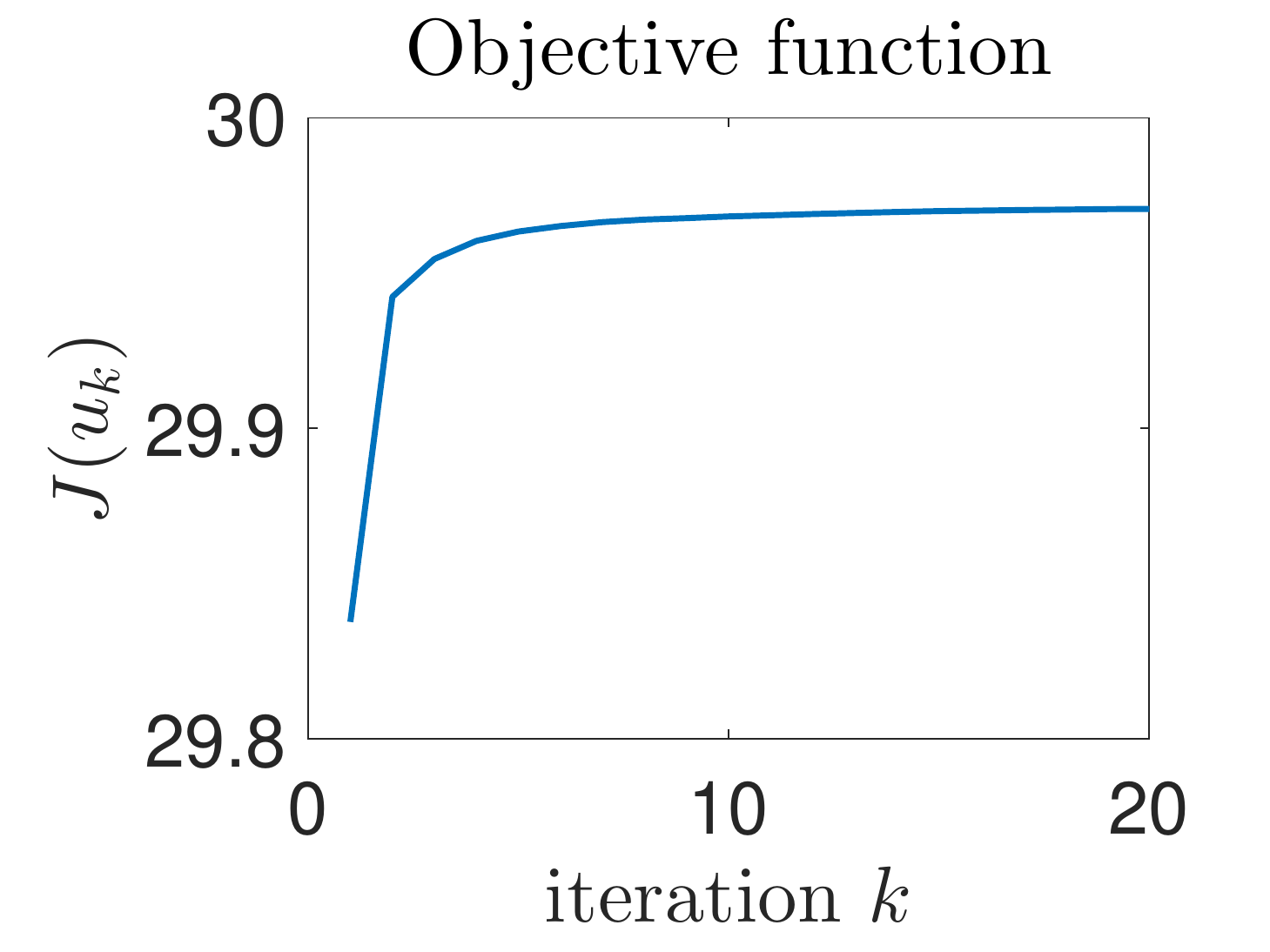}
	\includegraphics[scale=.25]{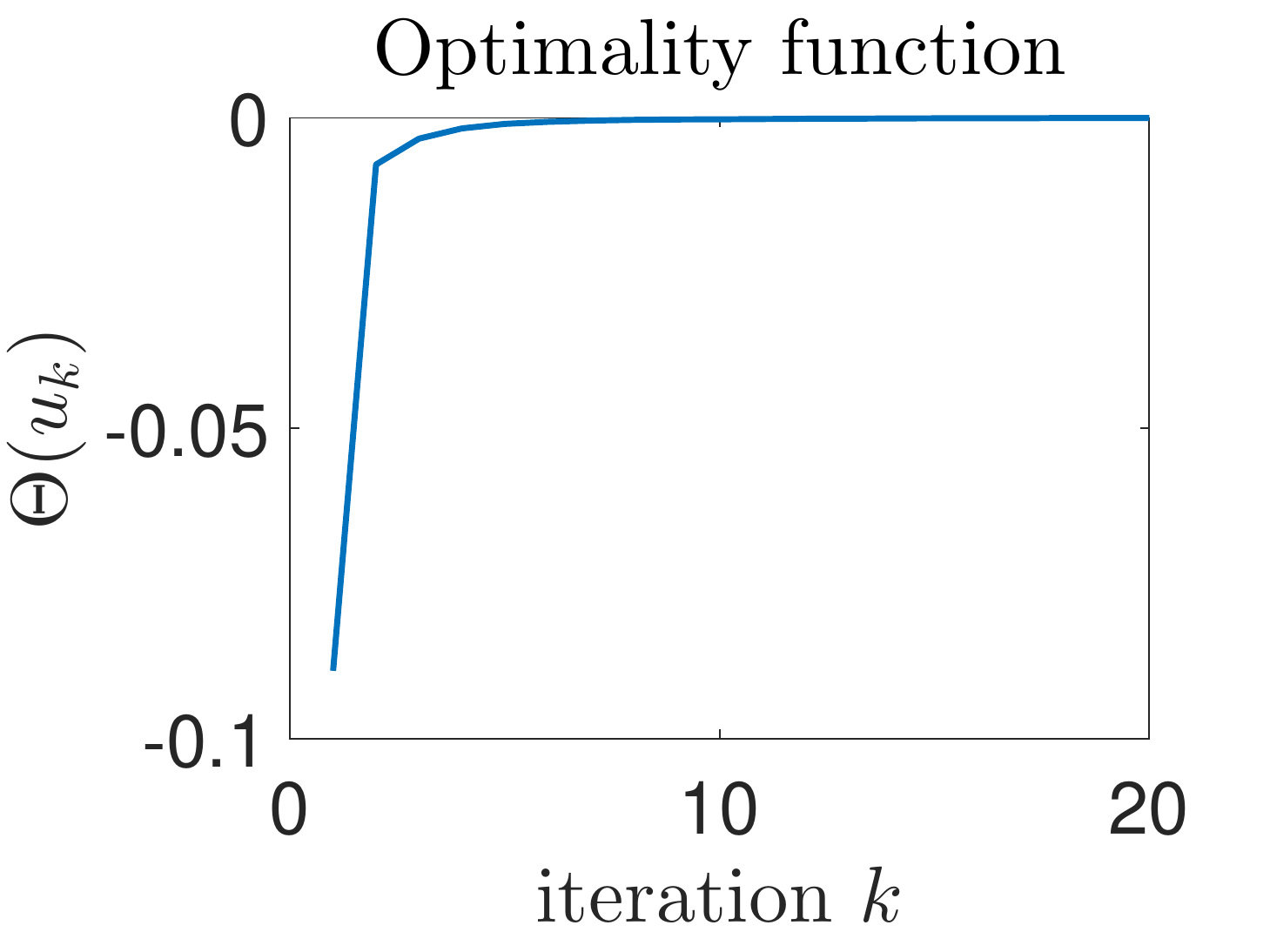}
		\centering
		\caption{State-dependent initial guess $u_0(t) = \text{sgn}(x(t)-x_c)$. }
		\label{fig:ubb_incentive}
	\end{subfigure}  
	\caption{Simulation results from applying Algorithm \ref{alg:algorithm} with $[R_0,S_0,T_0,P_0] = [4.5,4,3,3]$ and $u_m = 1$ to incentive control problem \eqref{eq:control_incentive}.  In left panels (a), we applied 40 iterations with $u_0(t) = 0$ (runtime 485 s). (Top) Environment dynamics $n(t)$ (black) overlayed with the resulting control $u_{40}$ (red). (Bottom Left) Objective scores $J(u_k) = \int_0^{T_f} n^2(t) dt$ vs iteration number $k$, where $J(u_{40}) =25.6359$. (Bottom Right) The optimality function $\Theta(u_k)$ (eq \eqref{eq:Theta} in Appendix) vs iteration number $k$, where $\Theta(u_{40}) \approx -0.0033$. In right panels (b), we set $u_0(t) =  \text{sgn}(x-x_c)$, and run 20 iterations (runtime 103 s). We obtain $J(u_{20}) =29.9707$ and $\Theta(u_{20}) =  -1.95 \times 10^{-5}$.  }
	\label{fig:V2_incentive_u0}
\end{figure*}
\begin{proposition}
	An optimal controller $u^*$ given by \eqref{eq:ustar_incentive} is non-singular. That is, it switches between the two values $\{-u_m,u_m\}$ at isolated points in the horizon interval $[0,T_f]$.
\end{proposition}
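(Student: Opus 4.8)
The plan is to rule out singular arcs by successively differentiating the switching function, exploiting the control-affine structure of \eqref{eq:control_incentive}. Writing $b(x,n) \equiv x^2(1-x)(1-n)$, the dynamics have the form $\dot{\bm{y}} = f(\bm{y}) + u\,g(\bm{y})$ with $g = (b,0)^\top$, so that $\varphi = \lambda_x b$. Since the interior $(0,1)^2$ is forward invariant, $b > 0$ along every admissible trajectory, and hence $\varphi(t) = 0$ if and only if $\lambda_x(t) = 0$. A singular arc would require $\varphi \equiv 0$ — equivalently $\lambda_x \equiv 0$ — on some open subinterval $I \subseteq [0,T_f]$, and I would argue by contradiction that this cannot happen.

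First I would assume $\lambda_x \equiv 0$ on $I$, which forces $\dot\lambda_x \equiv 0$ on $I$ as well. Substituting $\lambda_x = 0$ into $\dot\lambda_x = -\partial H/\partial x$ kills every term carrying the factor $\lambda_x$ (in particular all of the $u$-dependent pieces coming from $b\,u$), leaving only the contribution from the $n$-dynamics, namely $\dot\lambda_x = -\lambda_n\, n(1-n)(1+\theta)$. Because $n(1-n)(1+\theta) > 0$ in the interior, setting $\dot\lambda_x = 0$ yields $\lambda_n \equiv 0$ on $I$.

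The second differentiation then delivers the contradiction. With both $\lambda_x \equiv 0$ and $\lambda_n \equiv 0$ on $I$, we also have $\dot\lambda_n \equiv 0$ there; but $\dot\lambda_n = -\partial H/\partial n$, and when both co-states vanish the only surviving term is the gradient of the running cost, $\partial(n^2)/\partial n = 2n$. Hence $\dot\lambda_n = -2n$, and $\dot\lambda_n = 0$ forces $n = 0$ on $I$, contradicting the interior invariance $n > 0$. This rules out the singular case, so $\varphi$ can vanish only at isolated instants and $u^*$ is bang-bang.

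I expect the only delicate point to be the bookkeeping of which Hamiltonian partials survive after each substitution. In Lie-bracket language (Ch. 4.4 of \cite{Liberzon}), the computation above is the statement that the first-order term $\langle\bm{\lambda},[f,g]\rangle$ vanishing forces $\lambda_n=0$, while at the next order it is the inhomogeneous running-cost term $2n$ that breaks the arc. The main thing to verify carefully is that no additional terms sneak back in after each substitution — specifically that the omitted $u$-dependent pieces of $\partial H/\partial x$ all carry the factor $\lambda_x$ and therefore drop out — after which the argument reduces entirely to the strict positivity of $b$, $n(1-n)$, and $1+\theta$ in the interior.
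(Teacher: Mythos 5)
Your argument is correct, and at its core it is the same computation as the paper's, just phrased via direct substitution into the adjoint equations rather than via Lie brackets. The paper writes $\varphi=\langle\bm{\lambda},G\rangle$ with $G=[x^2(1-x)(1-n),0]^\top$, computes $\dot\varphi=\langle\bm{\lambda},[F,G]\rangle$, and concludes from the linear independence of $G$ and $[F,G]$ (the second entry of $[F,G]$ being a nonzero multiple of $-n(1-n)(1+\theta)$) that no singular arc exists; your first differentiation step, $\dot\lambda_x=-\lambda_n n(1-n)(1+\theta)$ on the set where $\lambda_x=0$, is exactly $\langle\bm{\lambda},[F,G]\rangle$ evaluated at $\lambda_x=0$, so the two proofs coincide up to that point. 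Where you go further is the second differentiation: linear independence of $G$ and $[F,G]$ by itself only forces $\bm{\lambda}\equiv 0$ on a putative singular interval, and the paper leaves implicit why that is impossible. You supply the missing reason --- with $\lambda_x\equiv\lambda_n\equiv 0$ the adjoint equation gives $\dot\lambda_n=-\partial H/\partial n=-2n\neq 0$ in the interior, contradicting $\dot\lambda_n\equiv 0$ --- which uses the running cost $n^2$ in an essential way and closes a small logical gap in the published argument. Your bookkeeping is right: every $u$-dependent term of $\partial H/\partial x$ carries the factor $\lambda_x$, and the running cost does not depend on $x$, so nothing re-enters after the substitutions. The only remaining (shared) imprecision is that ruling out $\varphi\equiv 0$ on an open interval shows the zero set of $\varphi$ has empty interior rather than literally that its zeros are isolated; the paper elides this as well, so no change is needed on your end.
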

\begin{proof}
	The switching function can be written as the inner product
	\begin{equation}
		\varphi(t) = \langle \bm{\lambda}, G \rangle
	\end{equation}
	where $G = [x^2(1-x)(1-n),0]^\top$ is the control-affine vector field. The time derivative is given by
	\begin{equation}
		\dot{\varphi}(t) = \langle \bm{\lambda}, [F,G] \rangle
	\end{equation}
	where $F$ is the state vector field and 
	\begin{equation}
		[F,G] =  \frac{\partial G}{\partial \bm{y}}F - \frac{\partial F}{\partial \bm{y}}G
	\end{equation}
	is the Lie bracket of the vector fields $F$ and $G$. For non-singularity to hold for $u^*$, $\dot{\varphi}$ cannot be zero when $\varphi(t) = 0$. This is equivalent to proving the vector fields $G$ and $[F,G]$ are linearly independent. After some calculation, this amounts to checking independence for the vectors
	\begin{equation}
	\begin{bmatrix} 1 \\ 0 \end{bmatrix},\begin{bmatrix} g(x,n)(1-x) - n(-1+(1+\theta)x) -  x(1-x)\frac{\partial g}{\partial x} \\ -n(1-n)(1+\theta) \end{bmatrix}
	\end{equation}
	corresponding to $G$ and $[F,G]$, respectively. Since the second entry of $[F,G]$ is always non-zero, these vectors are linearly independent.
\end{proof}
%

\subsection{Numerical simulations}

We use the optimal control algorithm described in \cite{Hale_2016}, which is formally presented as Algorithm \ref{alg:algorithm} in the Appendix. The algorithm is based on hill climbing with Armijo step size \cite{Armijo_1966}. The direction it follows at each iteration is based on an explicit computation of the pointwise maximizer of the Hamiltonian function at time-points $t$ in a given finite grid. Hence, its effectiveness hinges on how easy it is to compute the maximizers. In the formulation \eqref{eq:control_incentive}, the state equation is affine in the control $u$ and nonlinear in the state variable. 

We applied Algorithm \ref{alg:algorithm} to the problem \eqref{eq:control_incentive}, where we fix $[R_1,S_1,T_1,P_1] = [3,1,6,2]$, $\theta = 0.7$, $x_0=0.7$, $n_0 = 0.3$, $T_f = 100$, $u_m = 1$, and Armijo parameters $\alpha=\beta = 0.5$. The values of $[R_0,S_0,T_0,P_0]$ are left unfixed in order to survey different outcomes from the distinct dynamical regions (see Figure \ref{fig:region_labels}). In addition, we leave several parameters of the algorithm to user discretion, e.g. the initial control guess $u_0(t)$ and the number of iterations. We utilize RK4-based integration solvers (ode45) for forward and backwards integration. Due to high nonlinearity of the state equation, a low error tolerance is required to produce accurate and numerically stable forward dynamics $\bm{y}(t)$, where we use relative and absolute tolerance values of $10^{-8}$.

In sample experiments from the four TOC regions, the resulting controllers were all unable to prevent the environment $n(t)$ from becoming depleted by the end of the time horizon. Of particular interest is the resulting controlled dynamics when $A_0(u)$ is confined within the V2 regime for all $u \in [-1,1]$, which we display in Figure \ref{fig:V2_incentive_u0}. With initial control guess $u_0(t) = 0$, the algorithm converges after 40 iterations to a controller that induces the states $x(t)$ and $n(t)$ to oscillate with greater amplitudes by applying the maximal negative incentive $u(t) = -1$ near the peaks of $n(t)$. It then quickly re-applies the maximum positive incentive $u(t)=1$.  Motivated by this resonance-like behavior, we run the algorithm again with the state-dependent initial guess $u_0(t) =  \text{sgn}(x-x_c)$, which switches between $\pm 1$ precisely at the points where $\dot{n}(t) = 0$. Here, $x_c \equiv 1/(1+\theta)$. The resulting controller after 20 iterations has deviated slightly away from this initial guess, and it outperforms the controller from the first experiment ($J = 25.6359$ vs $29.9707$). Also shown in Figure \ref{fig:V2_incentive_u0} are the iterates of the optimality function $\Theta(u_k)$ (see \eqref{eq:Theta} in Appendix), which is always non-positive. In these simulations, $\Theta(u_k)$ approaches zero, which indicates convergence to an optimal control satisfying PMP. The optimal controllers induce oscillatory behavior in a regime where the uncontrolled system settles at an intermediate equilibrium.

\section{Information control policies}\label{sec:control_information}

In this section, we extend the dynamics of \eqref{eq:xn_ODE} to incorporate a public opinion state  $o(t)\in[0,1]$. The state $o(t)$ is interpreted to be the average opinion in the population about the environment, and the population responds instead to this belief. We then formulate and numerically solve, using Algorithm \ref{alg:algorithm}, two optimal control problems where influence is applied directly to $o(t)$.
 
\subsection{Model with public opinion}
We introduce the following dynamics to model how opinions change in the population.
%
\begin{equation}\label{eq:xno_ODE}
	\begin{aligned}
		&\dot{x} = x(1-x)g(x,o)  \\
		&\dot{n} = n(1-n)(\theta x - (1-x)) \\
		&\dot{o} = -\gamma(o-n) \\
		&x_0,n_0,o_0\in(0,1)
	\end{aligned}
\end{equation}
where $\gamma > 0$. The form of the $\dot{o}$ equation induces $o(t)$ to track the environmental state $n(t)$. There is a lag between actual changes in the environment and the public becoming informed about the changes. The learning parameter $\gamma > 0$ determines how slow this lag is. For $\gamma$ low, $o(t)$ will not adapt quickly to the fluctuating $n(t)$. As $\gamma$ increases, $o(t)$ more successfully tracks $n(t)$. The $\dot{x}$ equation above is modified from \eqref{eq:xn_ODE} by replacing the relative fitness $g(x,n)$ with $g(x,o)$. Here, individual incentives are now determined by the current public opinion and not the true environmental state $n$. Thus, the previous system dynamics \eqref{eq:xn_ODE} can be interpreted as the population responding to perfect information about the environment, $o(t) = n(t)$ $\forall t\geq 0$. We denote the system mapping \eqref{eq:xno_ODE} with the mapping $F^o: [0,1]^3 \rightarrow \mbb{R}^3$. 
\begin{figure}
	\centering
	\includegraphics[scale=.27]{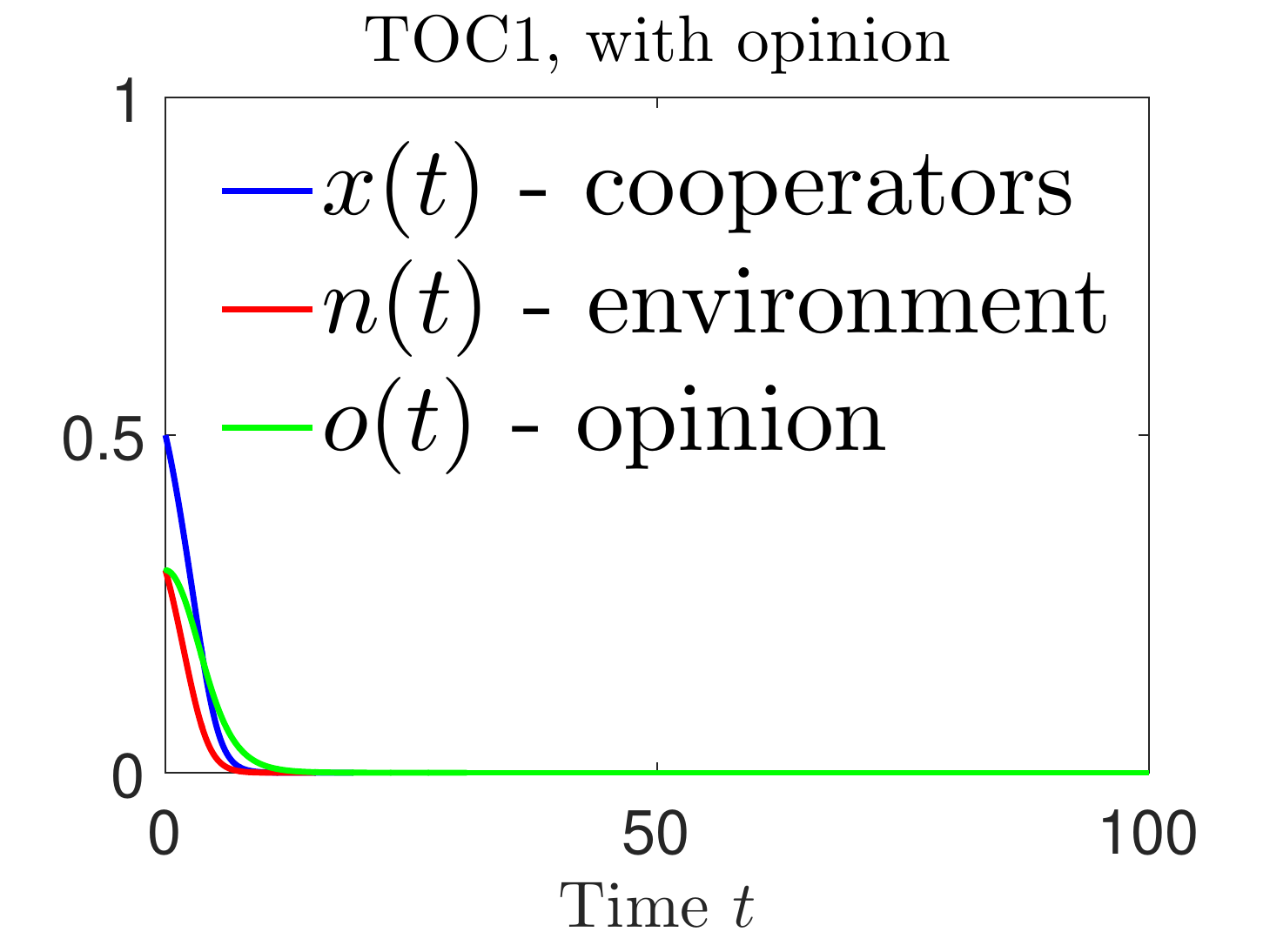}
	\includegraphics[scale=.27]{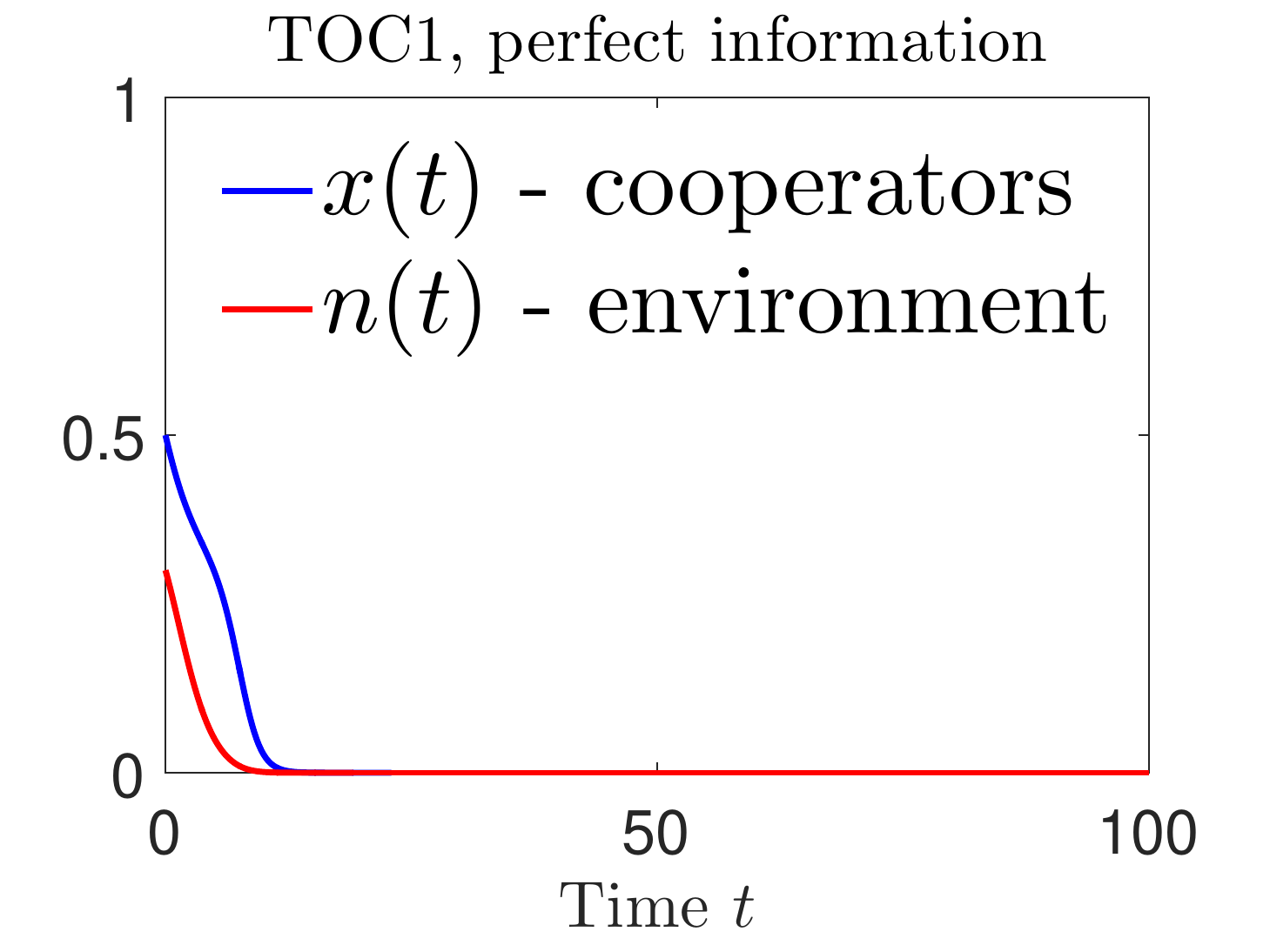}\vspace{2mm}
	\includegraphics[scale=.27]{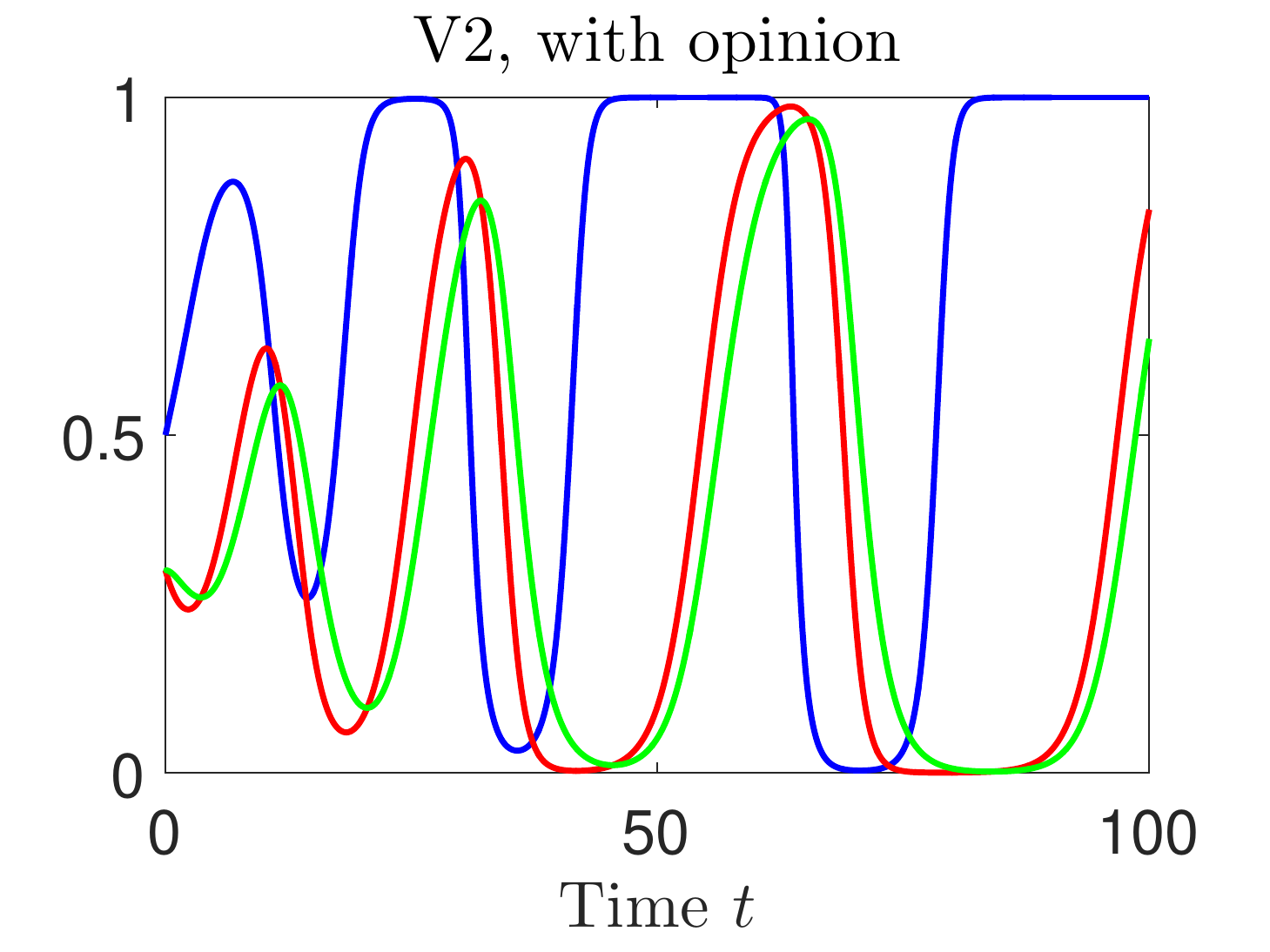}
	\includegraphics[scale=.27]{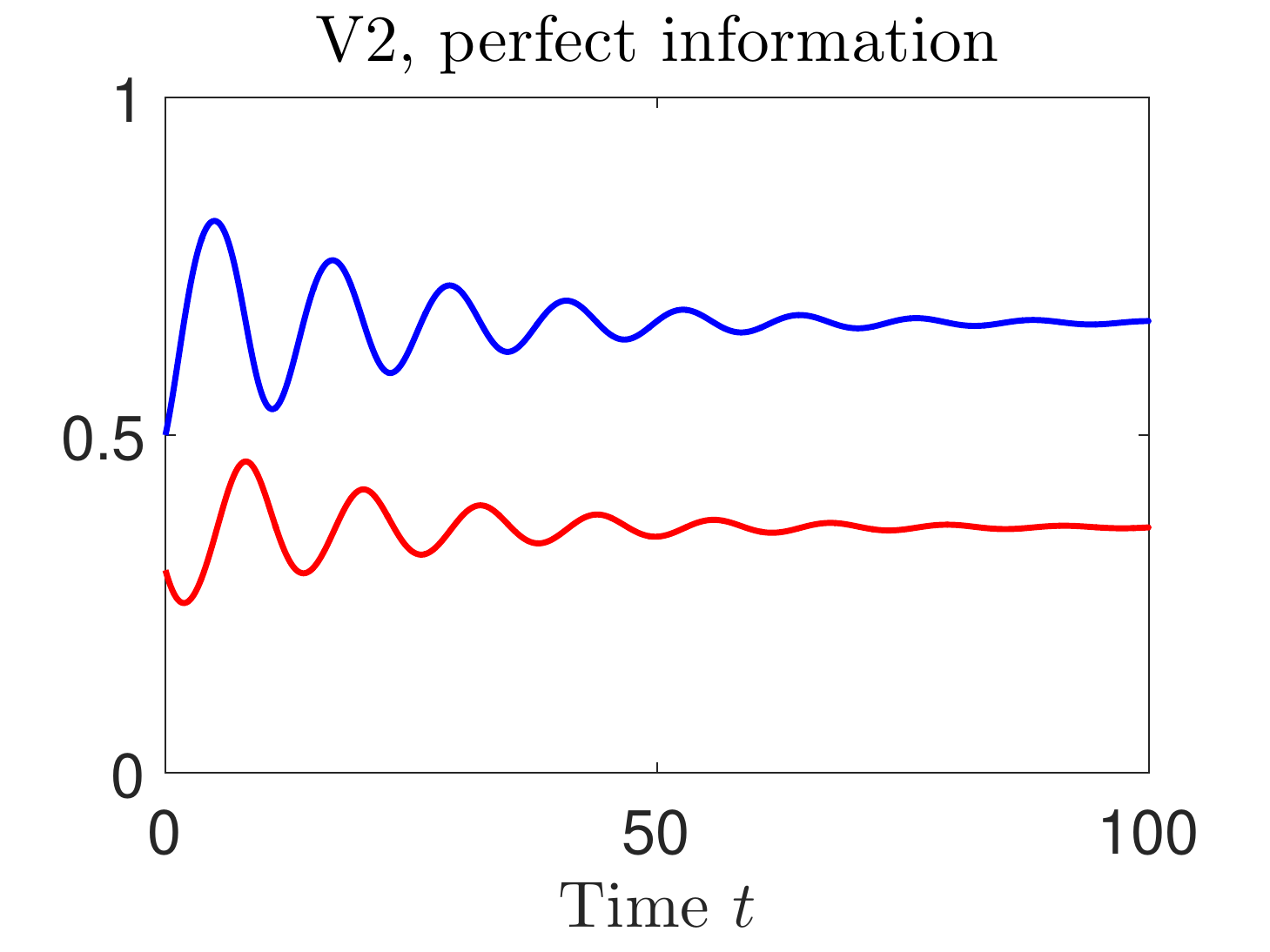}\vspace{2mm}
	\includegraphics[scale=.27]{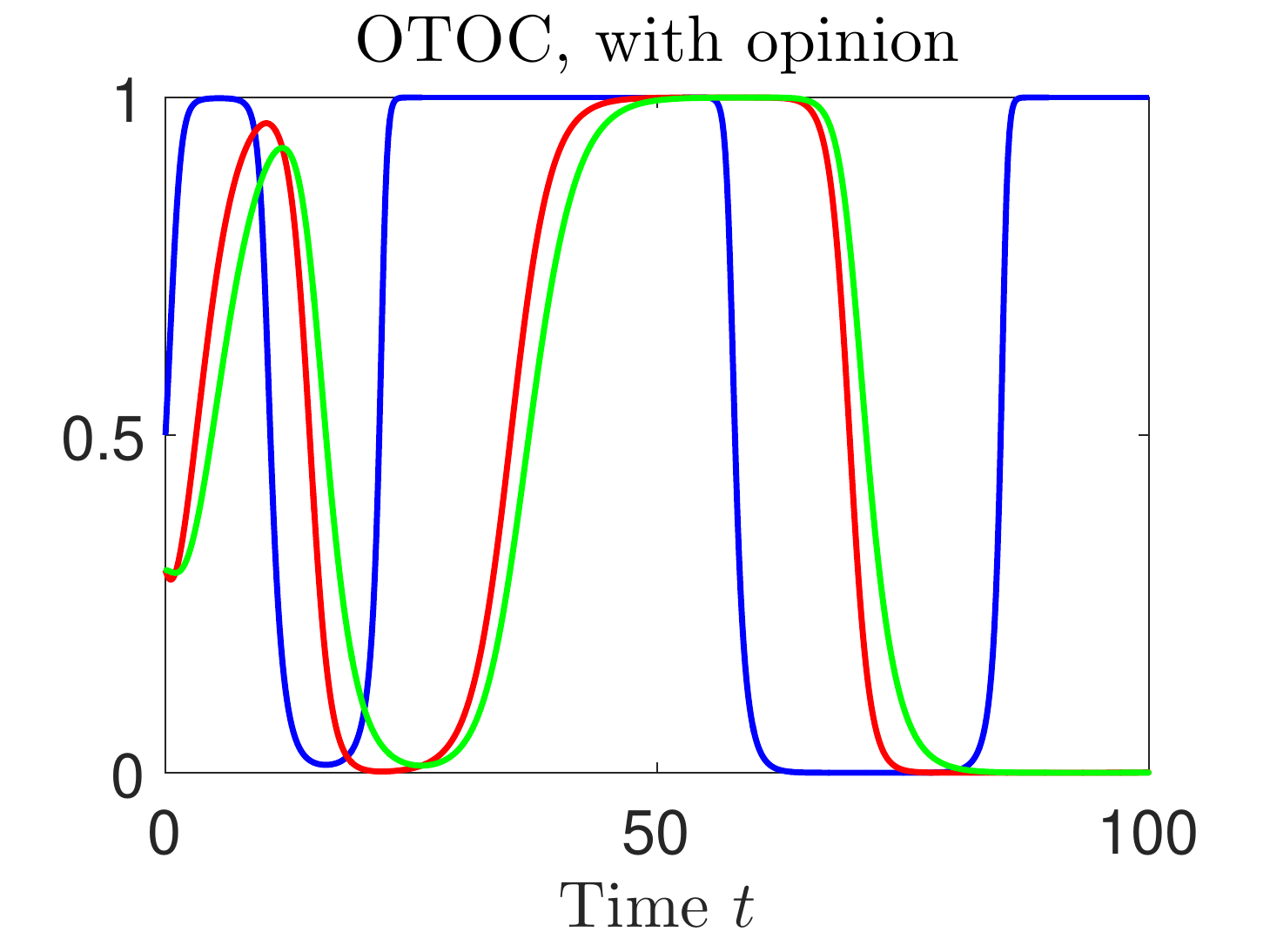}
	\includegraphics[scale=.27]{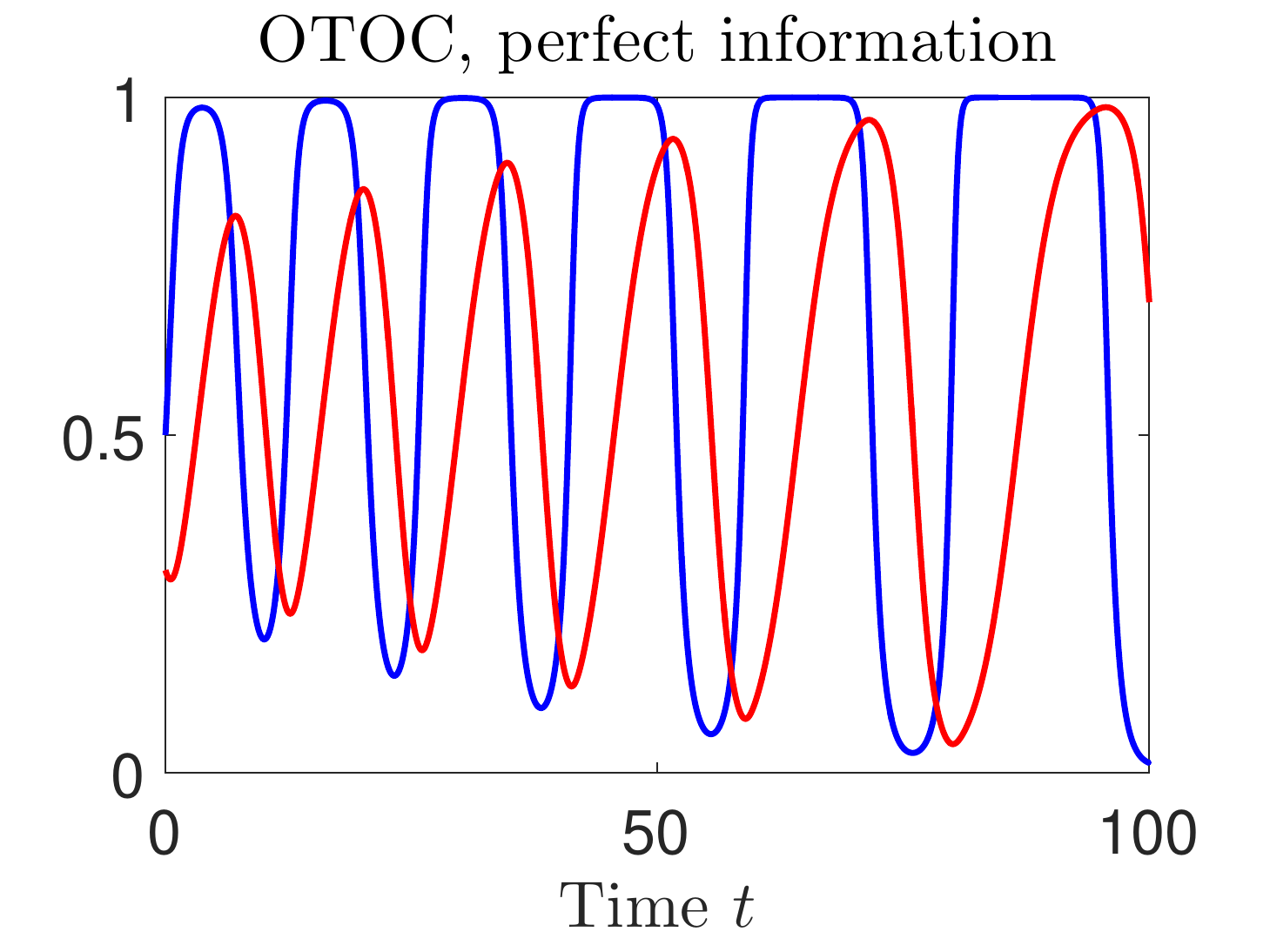}\vspace{2mm}
	\caption{Comparison of public opinion-induced dynamics (Left column) and the original feedback-evolving game (Right column). (Top row) $A_0=[R_0,S_0;T_0,P_0] = [5,2;3,3]$ (regime TOC1). Delay of opinion does not help to restore the commons. (Middle row) $A_0 = [4.5,4;3,3]$ (regime V2). Delayed opinion destabilizes the interior fixed point. (Bottom row) $A_0 = [7,4;3,3]$ (regime OTOC). Public opinion facilitates convergence to heteroclinic cycle in the $(x,n)$ trajectories. In all simulations, $[R_1,S_1,T_1,P_1] = [3,1,6,2]$, $\gamma = .5$, $\theta = .5$, $x_0 = .5$, $n_0 = .3$, $o_0 = .3$.  }
	\label{fig:opinion_comparison}
\end{figure}

We illustrate the dynamical effects of the public opinion for three dynamical regimes in Figure \ref{fig:opinion_comparison}. A notable effect occurs in the V2 and OTOC regimes, where the $(x(t),n(t))$ trajectories are pushed towards the boundary of the state space. This is due to the delay in opinion, and the intuition is as follows. When $n(t)$ starts to increase towards a peak, $o(t)$ lags behind and stays below $n(t)$. This causes more of the population to become cooperators, since they are responding to lower public opinion relative to the true resource state. As a result, $n(t)$ is restored more than it would have been if the population had perfect information. Then, $o(t)$ overestimates $n(t)$ as it decreases, causing more of the population to defect, degrading the environment. This process continues to repeat, causing oscillations to have larger amplitudes. 

\subsection{Optimal control formulation: propaganda strategies}
Here, we consider an external entity, e.g. media platforms, politicians, and activists, that seeks to maximally conserve the environment by influencing the public's opinion. First, we study policies that perturb opinion by injecting information. Propaganda and media broadcasts can achieve such perturbations, for example. We formulate the following optimal control problem.
\begin{equation}\label{eq:control_propaganda}
	\begin{aligned}
		&\max_{u} J=  \frac{1}{2}\int_0^{T_f} C_1 n^{2}(t) - C_2 u^2(t) dt \\
		&\text{subject to } \begin{cases} \dot x = x(1-x)g(x,o) \\
		\dot n =n(1-n)(-1+(1+\theta)x) \\
		\dot o = -\gamma(o-n) + o(1-o)u \\
		x_0,n_0,o_0 \in (0,1)  \end{cases}
	\end{aligned}
\end{equation}
where $C_1,C_2> 0$ are the priority and regulator weights, respectively. We denote the above dynamics as $\dot{\bm{y}} = F^o(\bm{y},u)$. The additive control term $o(1-o)u$ serves two purposes. First, it keeps the dynamics well-posed, i.e. a solution $o(t)$ that starts in $[0,1]$ will stay in $[0,1]$. Second, it models the difficulty to influence extreme opinions. The additive term decreases to zero as $o$ approaches the extremes 0 and 1, and hence more influence is required to move $o(t)$ away from the extremes. Note that the control function is left unconstrained, $u(t) \in \mbb{R}$ $\forall t \in [0,T_f]$.
\begin{figure*}[t!]
	\centering
	\begin{subfigure}[t]{\columnwidth}
		\hspace{-5mm}\includegraphics[scale=.3]{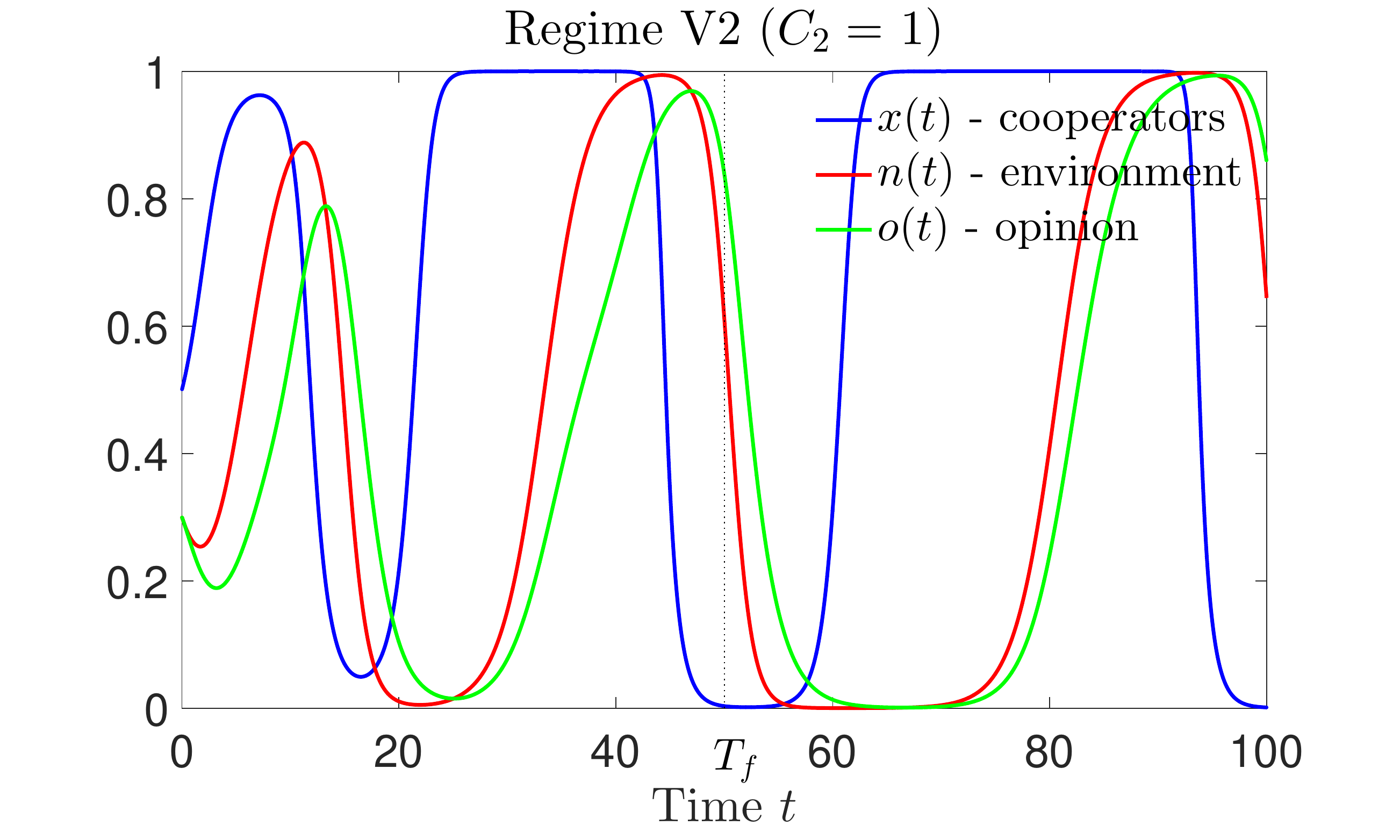}
		\includegraphics[scale=.25]{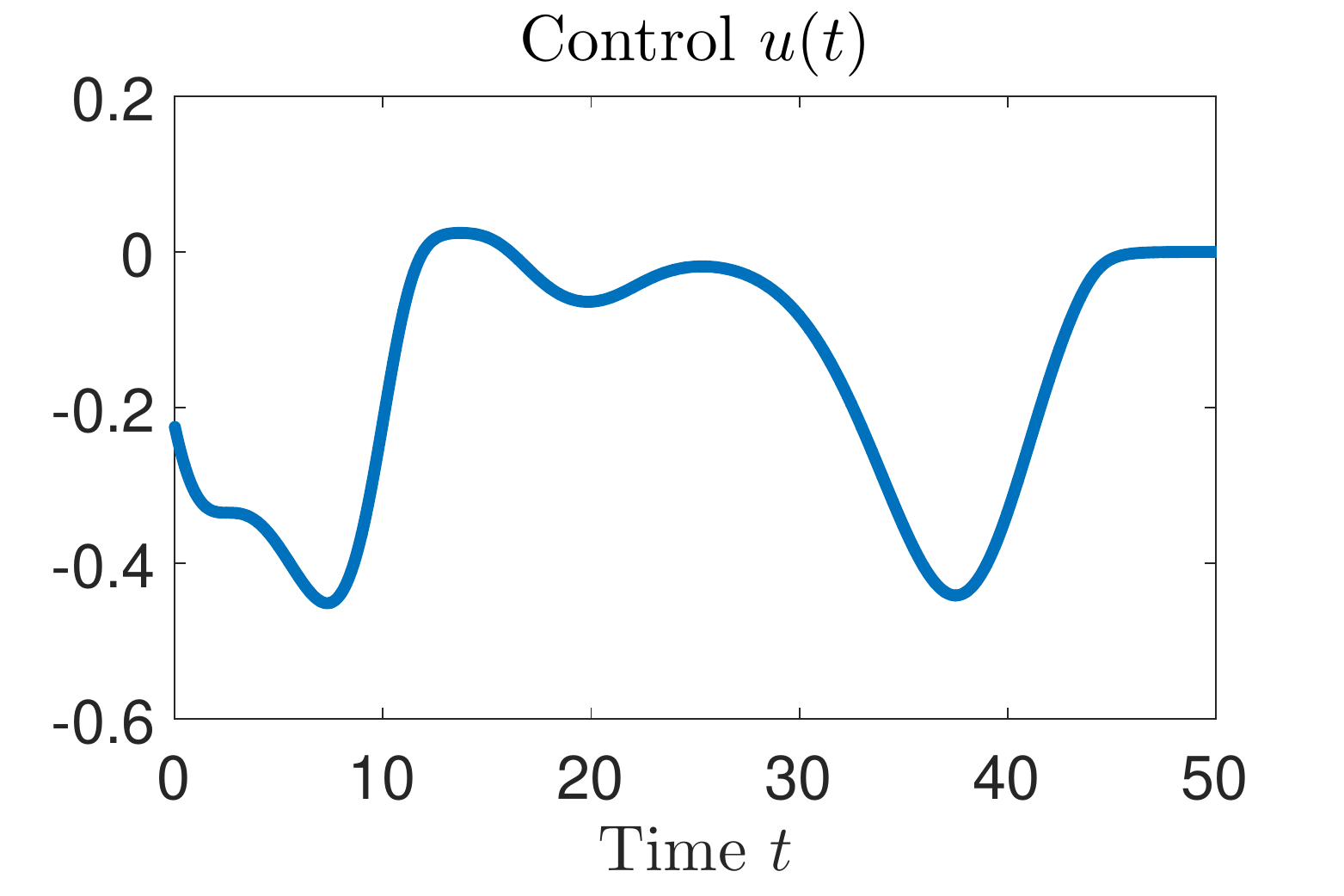}
		\includegraphics[scale=.25]{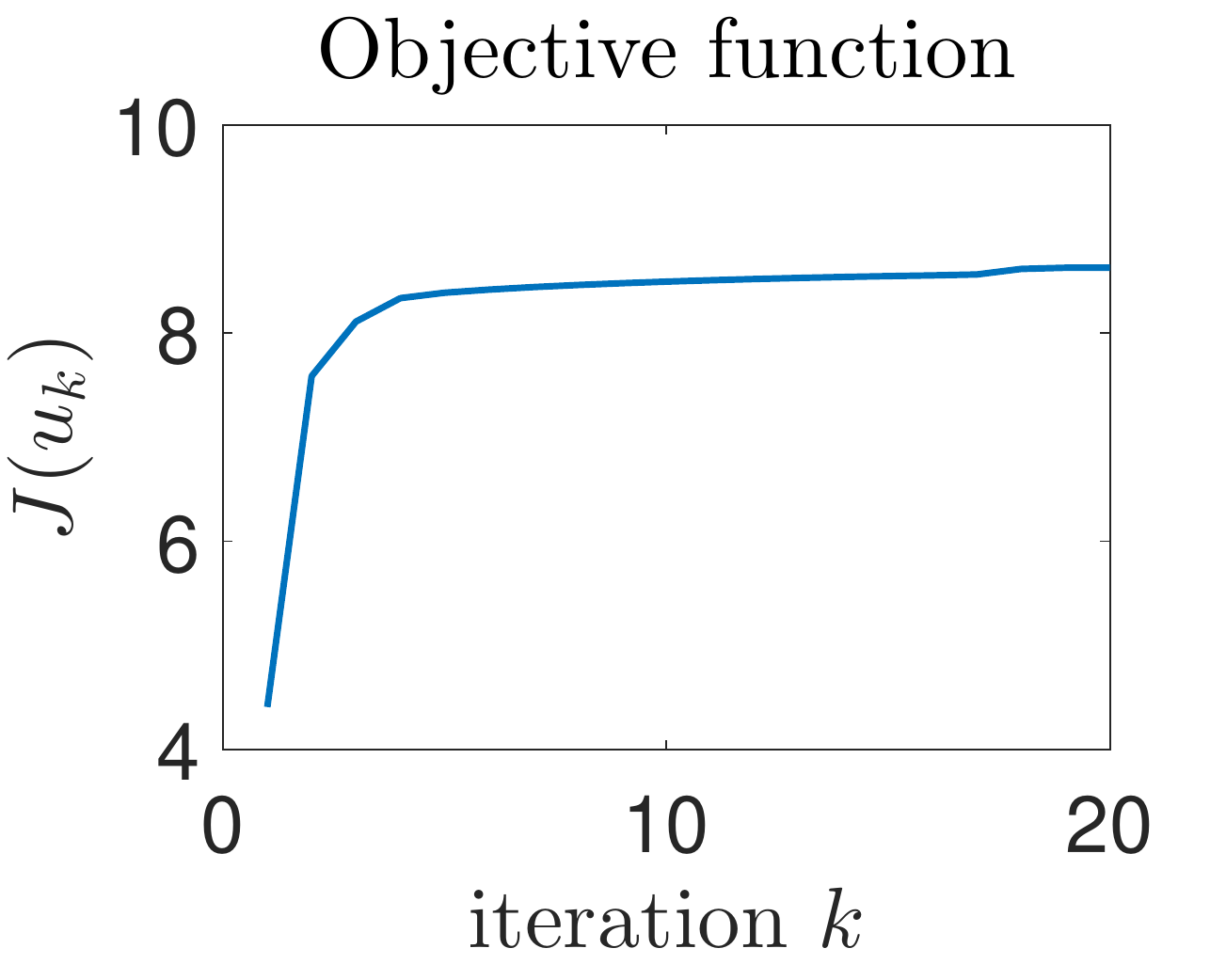} 
		\centering
		\caption{}
		\label{fig:propaganda_V2_C2_1}
	\end{subfigure}  
	\begin{subfigure}[t]{\columnwidth}
		\hspace{-5mm}\includegraphics[scale=.3]{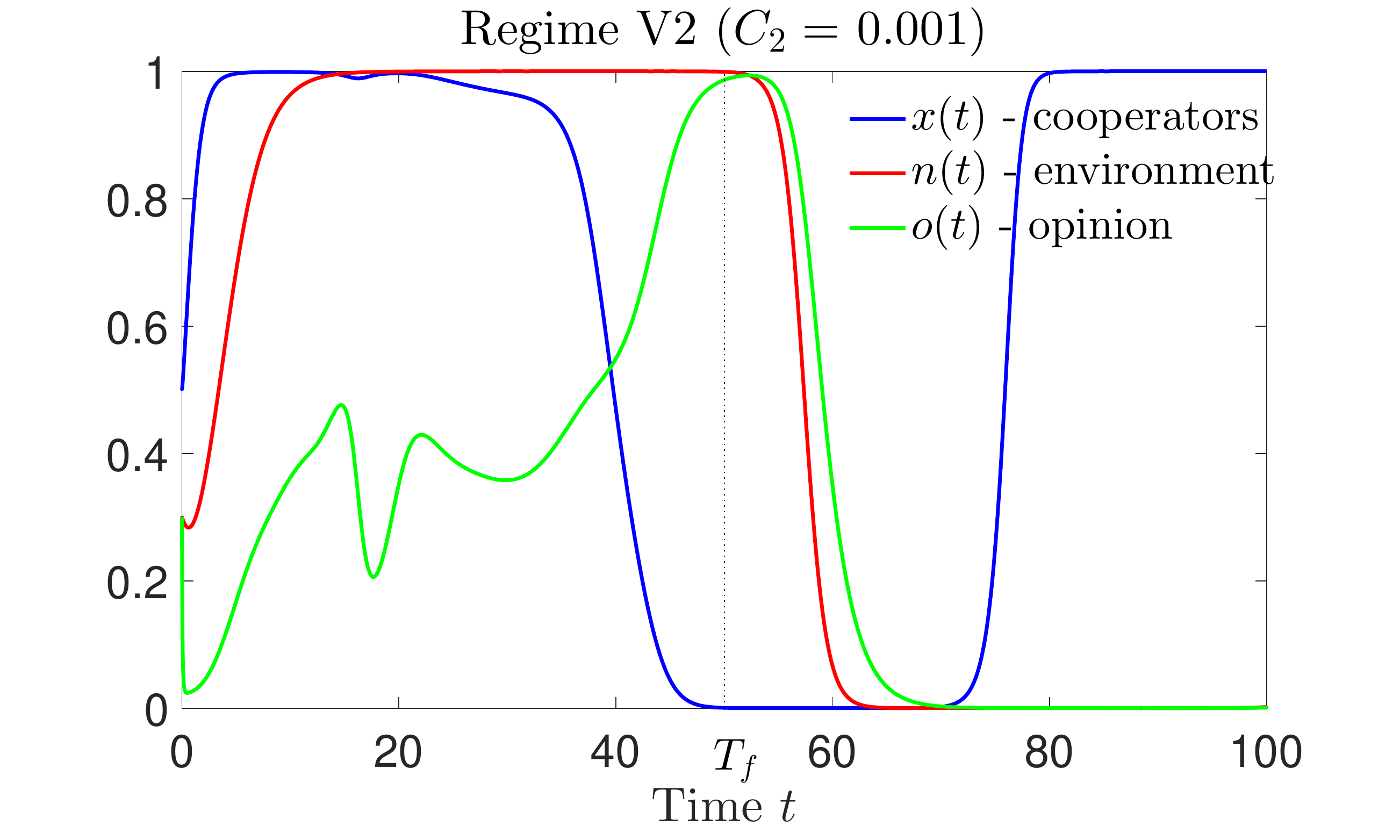}
		\includegraphics[scale=.25]{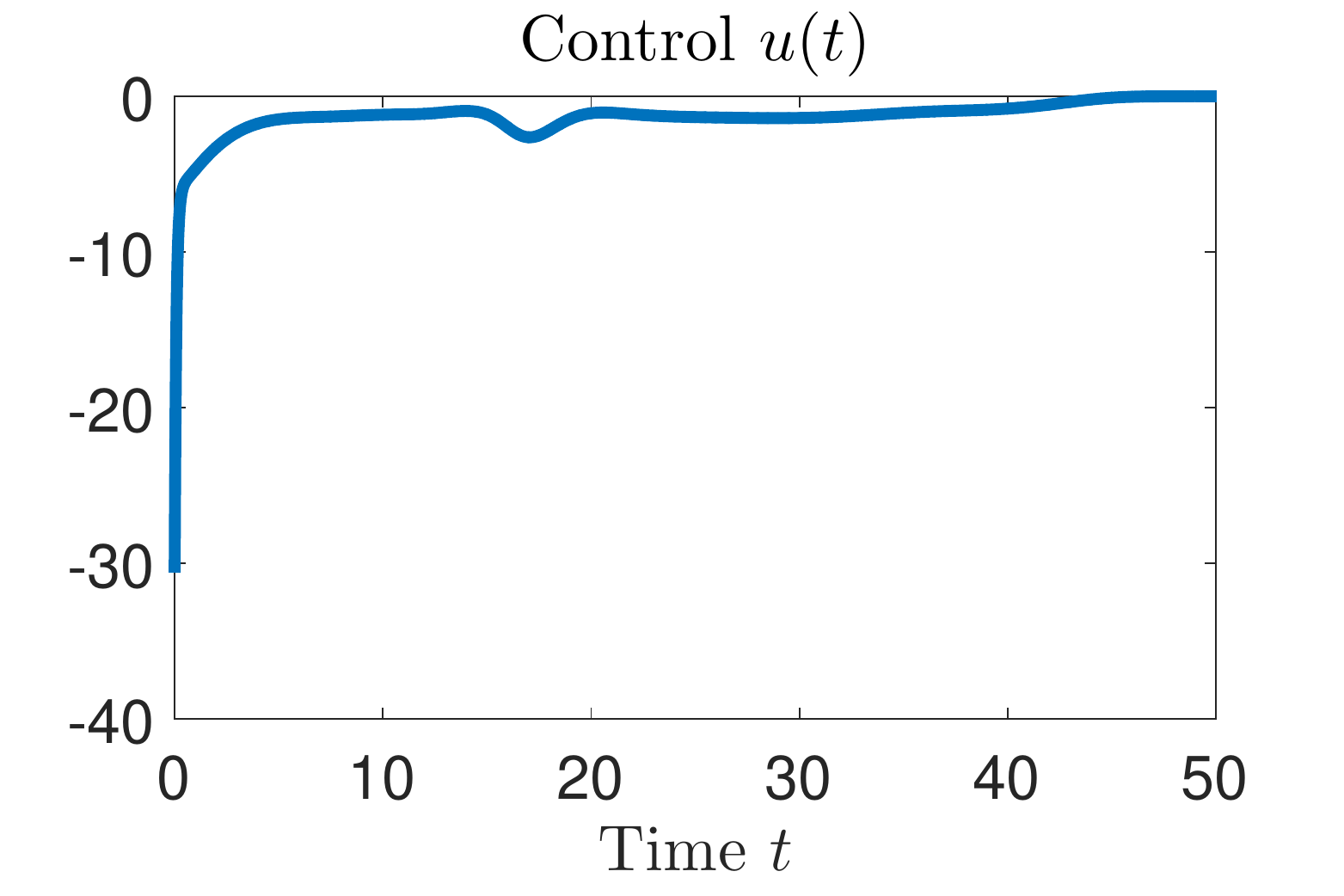}
		\includegraphics[scale=.25]{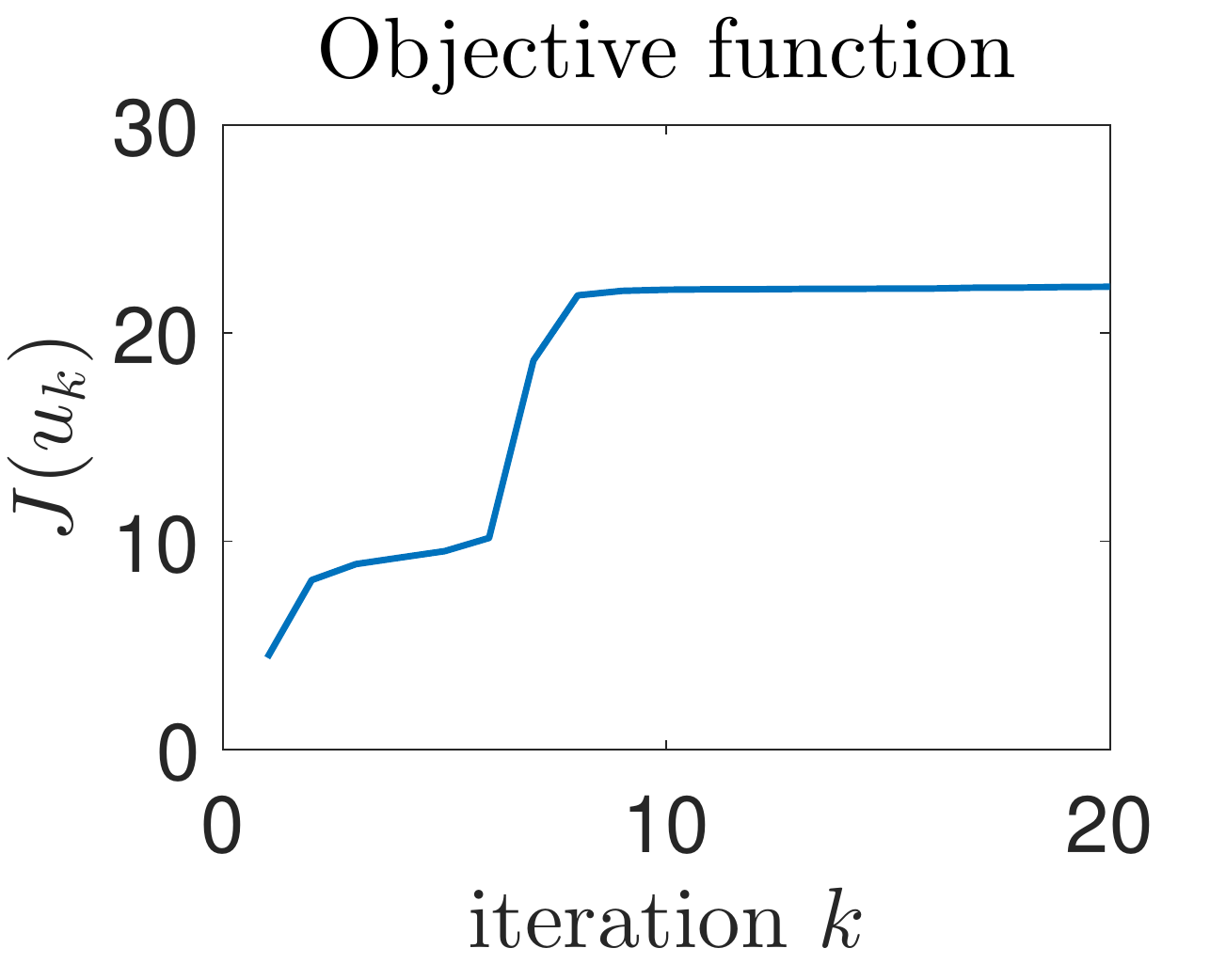}
		\centering
		\caption{}
		\label{fig:propaganda_V2_C2_point001}
	\end{subfigure} 
	\caption{An application of Algorithm \ref{alg:algorithm} with $[R_0,S_0,T_0,P_0] = [4.5,4,3,3]$ (V2 regime) to propaganda control problem \eqref{eq:control_propaganda} with $T_f = 50$. In left panels (a), we applied 20 iterations (runtime 69.938 s) with $u_0(t) = 0$. (Top) State trajectories. After $u_{20}(t)$ is applied on $t\in[0,T_f]$, dynamics are continued without control for a time of length 50. (Bottom Left) The control function $u_{20}(t)$. (Bottom Right) Objective scores $J(u_k)$ vs iteration number $k$, where $J(u_{20}) = 8.629$ and $\Theta(u_{20}) =  -0.0048$ (not plotted).  In right panels (b), we set $C_2 = 0.001$ and run 20 iterations (runtime 257.197 s). We obtain $J(u_{20}) =22.22$ and $\Theta(u_{20}) =  -0.023$.}
\end{figure*}
The Hamiltonian is
\begin{equation}
	\begin{aligned}
		H(\bm{y},\bm{\lambda},u) &= \lambda_x x(1-x)g(x,o) + \lambda_n n(1-n)(\theta x - (1-x)) \\
		& +\lambda_o(-\gamma(o-n) + o(1-o)u) +\frac{1}{2} (C_1 n^2 - C_2 u^2)
	\end{aligned}
\end{equation}
where the costate $\bm{\lambda} = [\lambda_x,\lambda_n,\lambda_o]^\top$ obeys the dynamics
\begin{equation}
	\dot{\bm{\lambda}} = -\frac{\partial H}{\partial \bm{y}}(\bm{y},\bm{\lambda},u)
\end{equation}
with $\bm{\lambda}(T_f) = [0,0,0]^\top$. The expression of $H$ is concave in $u$, and hence it admits the unique point-wise maximizer
\begin{equation}\label{eq:ustar_propaganda}
	u^*(t) = \frac{1}{C_2}\lambda_o(t) o(t)(1-o(t)).
\end{equation}

We applied Algorithm \ref{alg:algorithm} to the problem \eqref{eq:control_propaganda}. We fix the priority weight $C_1 = 1$, and study modifications to the regulator weight $C_2$. We fix $[R_1,S_1,T_1,P_1] = [3,1,6,2]$, $\theta = 0.5$, $\gamma = 0.5$, $(x_0,n_0,o_0) = (.5,.3,.3)$, $T_f = 50$, and Armijo parameters $\alpha=\beta = 0.5$. First, a notable observation was that the environmental state could be rescued in the TOC1 regime for a limited time, followed by collapse, if effort cost was low ($C_2 = .001$). When the cost weights are balanced, e.g. $C_2 = 1$, we did not observe resurgence of the commons in any of the TOC  regimes.  

In regime V2, when control effort is balanced ($C_2 = 1$, Figure \ref{fig:propaganda_V2_C2_1}), the computed control applies effort in waves. The control starts with a high negative amplitude as $o(t)$ and $n(t)$ begin to ascend, pushing $o(t)$ lower relative to $n(t)$ to cause a resurgence of cooperators (blue line). The control then relaxes its effort as the states $o(t),n(t)$ begin to decrease, $t\approx 15$. It applies negative effort again as $n(t)$ and $o(t)$ begin to ascend around $t\in[30,40]$. Hence, the control $u(t)$ promotes cooperation through negative control effort at selected times during the horizon. This causes the environment to oscillate between more extreme depleted and repleted states. In Figure \ref{fig:propaganda_V2_C2_point001} with $C_2 = 0.001$, control effort is cheap. The resulting control applies a large negative impulse at the beginning to push $o(t)$ very low. This stimulates the growth of cooperators and consequently, the environment, which stays near $n=1$ until after $t=T_f$. After the initial impulse, $u(t)$ relaxes for the rest of the horizon, causing $o(t)$ to eventually catch up to $n(t)$ and causing defectors to dominate. In the absence of control (after $T_f$), the environment collapses but will be subject again to another resurgence. Similar results are obtained when applying the algorithm in the OTOC dynamical regime.

\begin{figure*}[t!]
	\centering
	\begin{subfigure}[t]{\columnwidth}
		\hspace{-5mm}\includegraphics[scale=.3]{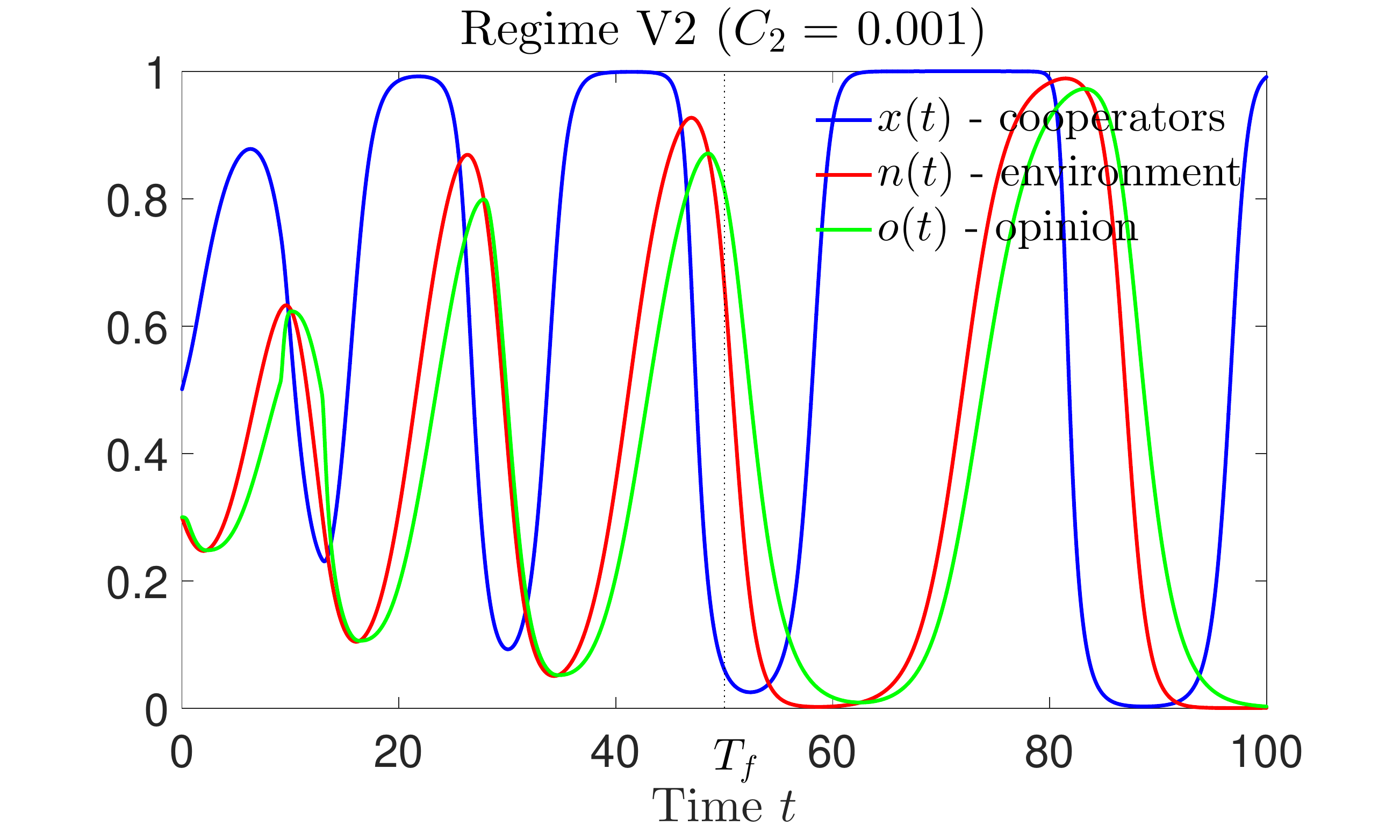}
		\includegraphics[scale=.25]{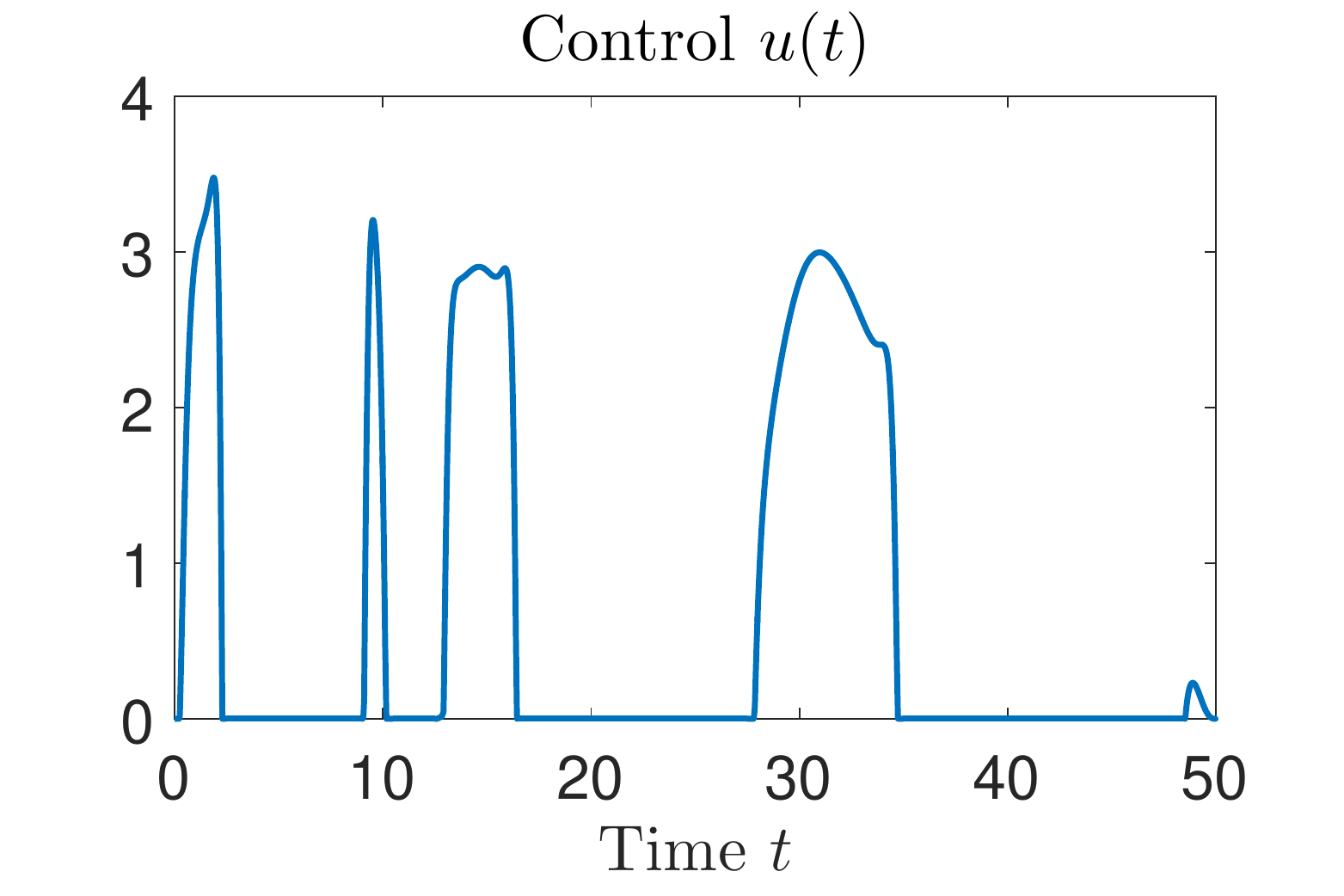}
		\includegraphics[scale=.25]{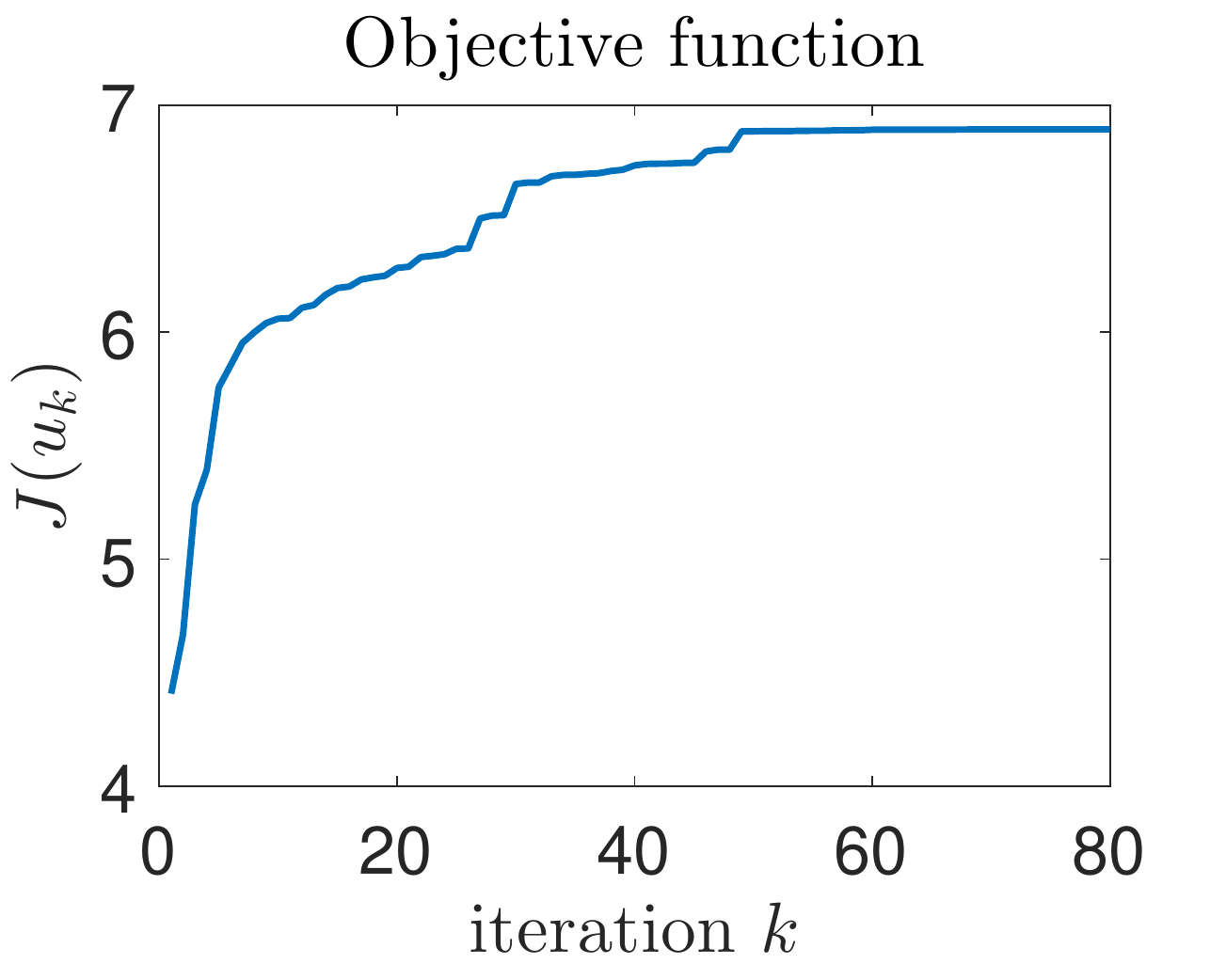} 
		\centering
		\caption{V2 regime with $[R_0,S_0,T_0,P_0] = [4.5,4,3,3]$. }
		\label{fig:awareness_V2_C2_point001}
	\end{subfigure}  
	\begin{subfigure}[t]{\columnwidth}
		\hspace{-5mm}\includegraphics[scale=.3]{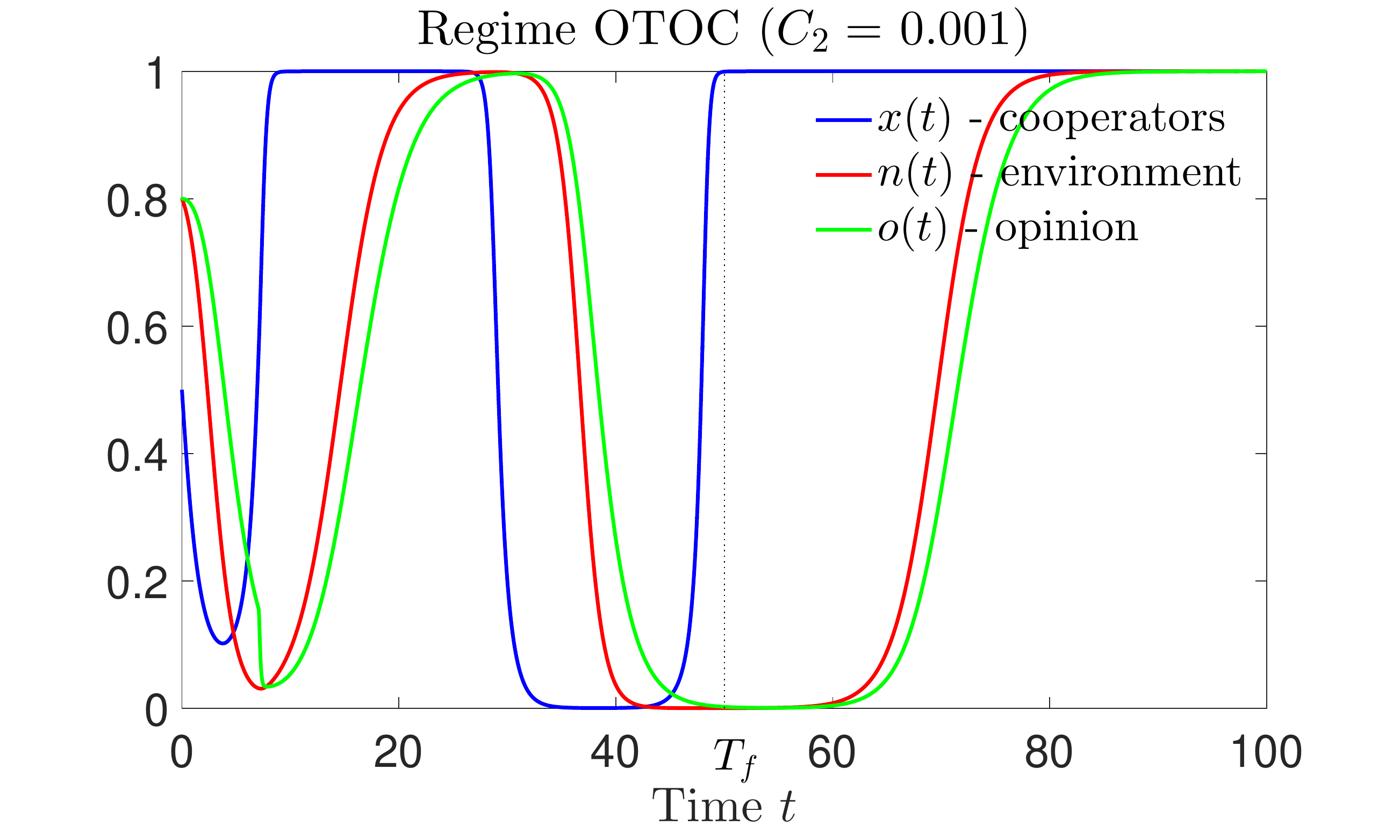}
		\includegraphics[scale=.25]{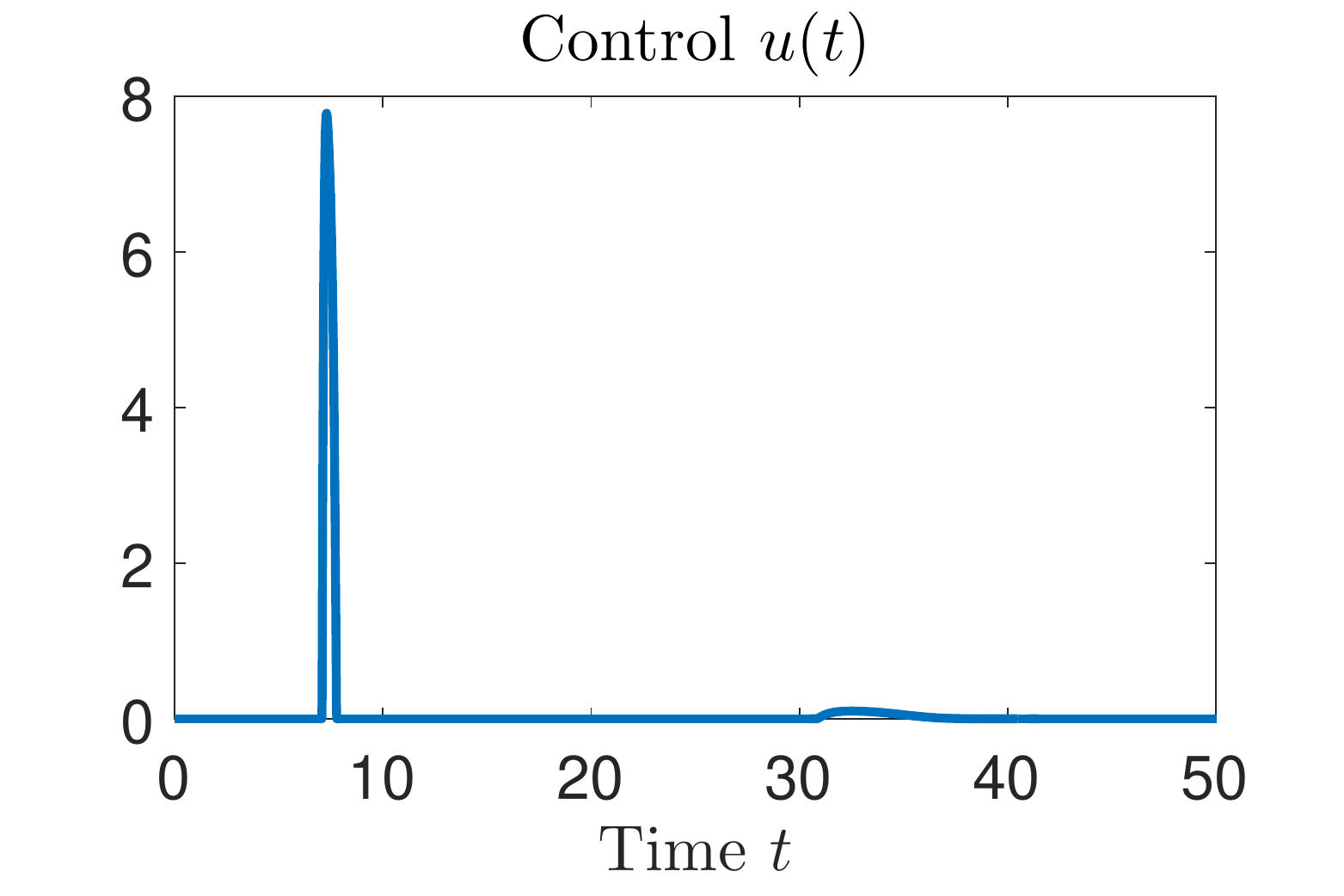}
		\includegraphics[scale=.25]{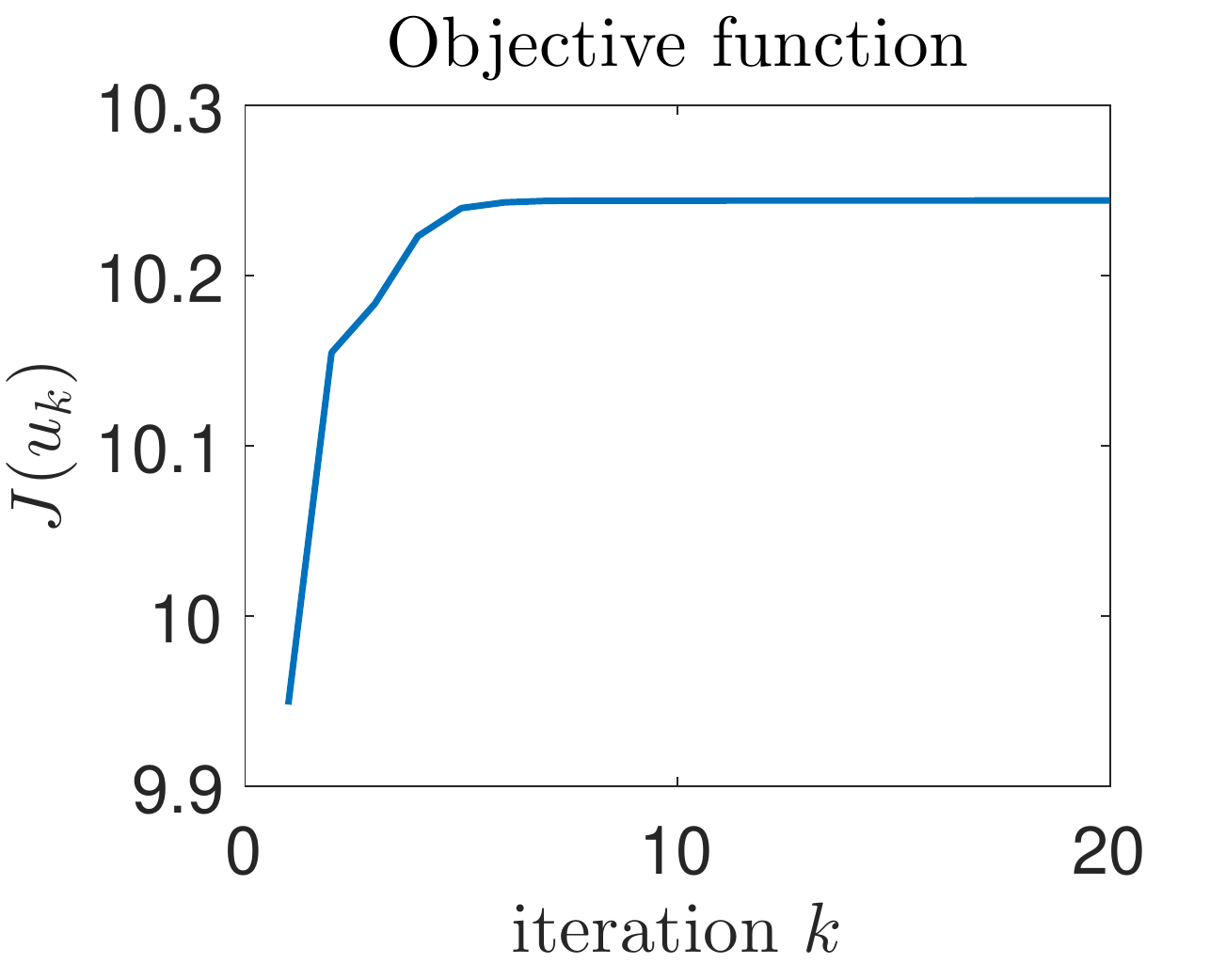}
		\centering
		\caption{OTOC regime with $[R_0,S_0,T_0,P_0] = [7,4,3,3]$. }
		\label{fig:awareness_OTOC_C2_point001}
	\end{subfigure} 
	\caption{An application of Algorithm \ref{alg:algorithm} to the awareness control problem \eqref{eq:control_awareness} with $T_f = 50$ to compute an optimal control $t\in[0,T_f]$. In left panels (a), we applied 80 iterations (runtime 313.2 s) in the V2 regime with $u_0(t) = 0$. (Top) State trajectories. After control is applied on $t\in[0,T_f]$, the dynamics are continued without control for a time of length 50. (Bottom Left) The control $u_{20}(t)$ after 20 iterations. (Bottom Right) Objective scores $J(u_k)$ vs iteration number $k$, where $J(u_{20}) = 6.894$ and $\Theta(u_{20}) =  -0.0001$.  In right panels (b), we set $C_2 = 0.001$ and run 20 iterations (runtime 51.87 s) with $x_0=0.5$, $n_0 = o_0 = 0.8$, and $A_0$ in the OTOC regime. We obtain $J(u_{20}) =10.24$ and $\Theta(u_{20}) =  -3.271\times 10^{-6}$.}
\end{figure*}
%

\subsection{Environmental awareness strategies}
We now consider strategic information policies that guide public opinion towards the true environmental state $n(t)$. Environmental awareness and educational campaigns are examples of interventions that serve this purpose. We formulate the following optimal control problem, with $u(t) \in [0,\infty)$ non-negative for all $t \in [0,T_f]$ directly affecting the public learning parameter $\gamma > 0$.
\begin{equation}\label{eq:control_awareness}
	\begin{aligned}
		&\max_{u} J=  \frac{1}{2}\int_0^{T_f} C_1 n^{2}(t) - C_2 u^2(t) dt \\
		&\text{subject to } \begin{cases} \dot x = x(1-x)g(x,o) \\
		\dot n =n(1-n)(-1+(1+\theta)x) \\
		\dot o = -(\gamma+u)(o-n) \\
		u(t) \in [0,\infty), \forall t \in [0,T_f] \\
		x_0,n_0,o_0 \in (0,1)  \end{cases}
	\end{aligned}
\end{equation}
The Hamiltonian of the awareness control problem is
\begin{equation}
	\begin{aligned}
		H(\bm{y},\bm{\lambda},u) &=  \lambda_x x(1-x)g(x,o) + \lambda_n n(1-n)(\theta x - (1-x)) \\
		& \ \ \ -\lambda_o(\gamma + u)(o-n) +\frac{1}{2} (C_1 n^2 - C_2 u^2).
	\end{aligned}
\end{equation}
where the costate $\bm{\lambda} = [\lambda_x,\lambda_n,\lambda_o]^\top$ obeys the dynamics
\begin{equation}
	\dot{\bm{\lambda}} = -\frac{\partial H}{\partial \bm{y}}(\bm{y},\bm{\lambda},u)
\end{equation}
with $\bm{\lambda}(T_f) = [0,0,0]^\top$. The pointwise maximizer of $H$ under the non-negativity constraint on $u(t)$ is
\begin{equation}\label{eq:ustar_awareness}
	u^*(t) = \begin{cases} 0 \ &\text{if} -(1/C_2)\lambda_o(o-n) < 0 \\ -(1/C_2)\lambda_o(o-n) &\text{if } -(1/C_2)\lambda_o(o-n) \geq 0 \end{cases}
\end{equation}

An application of Algorithm \ref{alg:algorithm} to the V2 regime is shown in Figure \ref{fig:awareness_V2_C2_point001}, where we set $C_2 = 0.001$, $\gamma = 0.5$, $\theta = 0.5$, and $[R_1,S_1,T_1,P_1] = [3,1,6,2]$. Due to the lag $\gamma$, public opinion $o(t)$ overestimates $n(t)$ on the intervals where $n(t)$ is decreasing, and underestimates when $n(t)$ is increasing. The resulting awareness control is applied only during these intervals to push opinion lower towards the true environmental state, and is not applied ($u(t) = 0$) on intervals where $n(t)$ is increasing. Consequently, the awareness  policy promotes cooperative behavior in times where public opinion overestimates the true environmental state. The induced dynamics resemble an oscillating tragedy of the commons.

In Figure \ref{fig:awareness_OTOC_C2_point001}, a similar principle holds for the resulting controller in the OTOC regime. A single impulse of awareness control is applied around $t=7$, when $o(t)\approx 0.16$ greatly overestimates $n(t) \approx 0.03$. Opinion quickly decreases to meet $n(t)$, causing a resurgence of cooperators. No more control is applied for the rest of the horizon. 


\section{Conclusions and discussion}\label{sec:conclusion}

In this paper, we extended  a game-environment feedback model \cite{Weitz_2016}  to study incentive and opinion control policies that seek to maximally conserve the environmental state. We formulated these policies in the setting of optimal control problems, and solved them by using suitable numerical techniques. The computed incentive policies are bang-bang controllers that, counter-intuitively, switch between maximal promotion and punishment of cooperative behaviors. The switching times occur near critical points of the environmental state dynamics. We then considered two methods of influencing public opinion about the environment. The first is a propaganda-like intervention where an external influencing agent attempts to sway public opinion. The second aims to raise public awareness of the current true environmental state, e.g. through environmental education programs or awareness campaigns. In  simulations, both methods steer public opinion lower, i.e. convincing the public that the environment is worse than what it actually is. 

We find in certain regimes (V2 and OTOC)  that the resulting controllers in all three control formulations induce large oscillations between deplete and replete environment states. The resulting oscillating tragedy of the commons maximizes accumulation of common resources because the policies increase the amount of time spent at high replete states. The major drawback is that repeated collapses of the resource are inevitable. This outcome is extremely undesirable if there are no alternative resource options. Hence, different ways of thinking about control are necessary.

\appendix
For all three problem formulations \eqref{eq:control_incentive}, \eqref{eq:control_propaganda}, and \eqref{eq:control_awareness}, we utilized the optimal control algorithm outlined below in Algorithm \ref{alg:algorithm}.
 For the interested reader, the details can be found in \cite{Hale_2016}.
 The algorithm is a hill-climbing technique with Armijo step sizes \cite{Armijo_1966}. Given a control
$u$, it computes an ascent direction, $u^{\star}$, as follows. First, choose a finite grid ${\mcal G}\subset [0,T_f]$, which may vary from one iteration to the next (in this paper, we fix the grid to have uniform spacing of 0.01). Solve the state trajectory forward and the costate (adjoint) trajectory backwards, by a numerical integration method. This yields the Hamiltonian function $H(\bm{y},\bm{\lambda},u)$. For every $t\in{\mcal G}$, compute the maximizer $u^*(t)$ of $H(\bm{y}(t),\bm{\lambda}(t),v)$
over admissible controls $v$. Interpolate the resulting values via zero-order hold to result in the control $u^{\star}(t)$ for every $t\in[0,T_f]$. The control $u^{\star}$ serves as the direction the algorithm takes from $u$. The cost functional $J$ increases along this direction, namely,
for a small enough step size $\delta>0$,
\begin{equation}
	J(u+\delta(u^{\star}-u))>J(u)
\end{equation}
under mild technical conditions.  In the algorithm, we use the Armijo step size, which is computed as follows. For a given $\beta\in(0,1)$, and a given $\alpha\in(0,1)$, the step size is $\beta^{\ell}$
where $\ell$ is the smallest non-negative integer such that
$J(u)-J(u+\beta^{\ell}(u^{\star}-u))\leq\alpha\beta^{\ell}\Theta(u)$, where 
\begin{equation}\label{eq:Theta}
	\Theta(u) \equiv \int_0^{T_f} (  H(\bm{y},\bm{\lambda},u) - H(\bm{y},\bm{\lambda},u^*) ) dt \leq 0.
\end{equation}
The term $\Theta(u)$ serves as an optimality function (see {\cite{Polak_1997}): It is always non-positive, where
$\Theta(u)=0$ means that $u$ satisfies PMP. Generally, $|\Theta(u)|$ measures the extent to which 
$u$ fails to satisfy PMP.

In its general form, the algorithm  \cite{Hale_2016} is defined in the framework of relaxed controls
(probability distributions on the space of ordinary controls) \cite{McShane_1967}.
However, in the setting of the problems formulated in this paper, it need only  compute ordinary controls.

\begin{algorithm}
\caption{Hamiltonian-based hill-climbing algorithm}\label{alg:algorithm}
\begin{algorithmic}[1]
\Procedure{}{} 
\State $k \gets 0$
\State $u_k \gets \text{Initial guess } u_0$
\While{$k < \texttt{iters}$}
\State Choose a finite grid $\mcal{G}_k \subset [0,T_f]$.
\State Solve forward for $\bm{y}_k(t)$ using $u_k$ 
\State Solve backwards for $\bm{\lambda}_k(t)$ using $u_k(t)$ and $\bm{y}_k(t)$
\State Compute, for every $t\in{\mcal G}_{k}$,  $u^*(t) = \argmax{u} H(\bm{y}_k,\bm{\lambda}_k,u)$; interpolate the results to define $u^*(t)$ $\forall$ $t\in[0,T_f]$.
\State Compute smallest $\ell =0,1,2\ldots$ s.t. \begin{equation} J(u_k) - J(u_k + \beta^\ell(u^* - u_k)) \leq \alpha\beta^\ell \Theta(u_k) \nonumber\end{equation}
\State $u_{k+1}(t) \gets u_k + \beta^\ell(u^* - u_k) $
\State $k \gets k+1$
\EndWhile\label{euclidendwhile}
\EndProcedure
\end{algorithmic}
\end{algorithm}

\section*{Acknowledgements}

This work is supported by ARO grant \#W911NF-14-1-0402 (to J.S.W).






\bibliographystyle{IEEEtran}
\bibliography{sources}

\end{document}